\documentclass{article}

\usepackage{amsmath}
\usepackage{amssymb}
\usepackage{amsthm}
\usepackage{bm}
\usepackage{array}
\usepackage{makecell}
\usepackage{dcolumn}
\usepackage{graphicx}
\usepackage{siunitx}
\usepackage{color}
\usepackage[affil-it]{authblk}

\renewcommand{\i}{\ensuremath{\text{I}}}
\newcommand{\ii}{\ensuremath{\text{I\!I}}}
\newcommand{\ici}{\ensuremath{\text{I,I}}}
\newcommand{\icii}{\ensuremath{\text{I,I\!I}}}
\newcommand{\iici}{\ensuremath{\text{I\!I,I}}}
\newcommand{\iicii}{\ensuremath{\text{I\!I,I\!I}}}

\newcommand{\iciii}{\ensuremath{\text{I,}\Gamma}}
\newcommand{\iiciii}{\ensuremath{\text{I\!I,}\Gamma}}
\newcommand{\overbar}[1]{\mkern 1.5mu\overline{\mkern-1.5mu#1\mkern-1.5mu}\mkern 1.5mu}
\newtheorem{theorem}{Theorem}
\newtheorem{lemma}[theorem]{Lemma}
\newtheorem{assumption}{Assumption}

\title{Stable absorbing boundary conditions for molecular dynamics in general domains}
\author{Xiaojie Wu \thanks{\texttt{xxw139@psu.edu}; Corresponding author}}
\author{Xiantao Li \thanks{\texttt{xxl12@psu.edu};}}
\affil{Department of Mathematics, \\Penn State University, University Park}

\begin{document}  
\maketitle
\begin{abstract}
    A new type of absorbing boundary conditions for molecular dynamics simulations are presented. The exact boundary conditions for crystalline solids with harmonic approximation are expressed as a dynamic Dirichlet-to-Neumann (DtN) map. It connects the displacement of the atoms at the boundary to the traction on these atoms. The DtN map is valid for a domain with general geometry. To avoid evaluating the time convolution of the dynamic DtN map, we approximate the associated kernel function by rational functions in the Laplace domain. The parameters in the approximations are determined by interpolations.  The explicit forms of the zeroth, first, and second order approximations will be presented. The stability of the molecular dynamics model, supplemented with these absorbing boundary conditions is established. Two numerical simulations are performed to demonstrate the effectiveness of the methods.
  \end{abstract}
  {\bf Keywords:} Absorbing boundary conditions; Molecular dynamics; Wave propagations
  
  \section{Introduction}
Absorbing boundary conditions (ABCs) are extremely important numerical tools \cite{Neuhasuer1989, Clayton1977, Mur1981} to efficiently simulate wave propagation phenomena in large or infinite domains. In these methods, the computational cost is greatly reduced by truncating the entire domain into a much smaller region of interest. The boundaries of the truncated region are specifically treated to retain the characteristics of the full system.  Mathematically, an exact boundary condition (BC) can often be derived, which tends to be nonlocal. For instance, in the work of Engquist and Majda \cite{Engquist1977}, the exact BC is represented via pseudo-differential operators. They can be approximated by Pad\'e approximations, which lead to local ABCs.  Another popular approach is to introduce an artificial absorbing layer to facilitate the propagation to the exterior. The most well-known method of this type is the perfectly matched layer (PML) method, which is originally formulated by Berenger \cite{Berenger1994} for Maxwell's equations. 


In the context of molecular dynamics (MD), the role of the ABCs is again critical: typical MD simulations involve a huge number of atoms and the computation tends to be prohibitively expensive. Therefore, ABCs are much needed to minimize the computational cost.  It is worthwhile to first point out that  the ABCs for MD models are different from those for continuous PDEs, since the dispersion relations are quite different. 
There has been a lot of recent development of  ABCs for MD models. There are mainly three approaches:

{\it Exact ABCs}. The exact ABC can be derived for planar boundaries using Fourier/Laplace transform, or the lattice Green's functions \cite{Cai2000,Wagner2003, Wagner2004, Park2005, Karpov2005}.  The main computational challenge is due to the convolutional integral in time, which has to be evaluated at each step. To the best of our knowledge, the best solution is given by the rational approximation of the continuum Green's function \cite{Namilae2007}. Nevertheless, these exact BCs are limited to planar boundaries, and the explicit forms break down at corners. 

{\it Approximate ABCs by minimizing the total phonon reflection}. In this approach, the reflection coefficient \cite{E2002, E2001} is calculated at boundaries of the truncated region. An approximate ABC, typically with a small number of previous time steps involved, is sought by minimizing the total reflection, in the form of  energy fluxes  \cite{Li2006}. This idea is extended to the cases of finite temperature \cite{Li2007}. A similar idea is the matching boundary conditions,  proposed by Tang and his co-workers in a series of works \cite{Wang2010, Fang2012, Pang2017}. They assume that the time history kernel can be approximated by an artificial local boundary condition. The unknowns are determined by a matching procedure at some pre-selected wave numbers. 

{\it Discrete PML}.  The  PML  method was extended to  MD models by To and Li \cite{To2005, Li2006a}. Another extension based on continuum PML was performed by Guddati and co-workers \cite{Guddati2009}. Their approach is based on Guddati's previous work, perfectly matched discrete layers (PMDL) \cite{Guddati2006}, which is a more accurate implementation of PML. Similar to continuous PML, discrete PML is adaptable to complex geometries, for example, one does not need to treat the corner issue explicitly. The main subtlety is how to determine the parameters in PML. For examples, if they are too small or too large, significant reflections will occur.

In this paper, we propose a systematic approach to formulate and approximate the ABCs for domains with general geometry. In particular, we adopt an impulse/response perspective. More specifically,  
the impulse corresponds to the displacement (or traction) of the atoms at the boundary, which will induce a mechanical field in the surround region. This influence will  in turn exert a kick-back force on the atoms at the boundary. Therefore,  the response, which would exhibit a history-dependence, corresponds to these forces (or displacement). This can be formulated more precisely using the dynamic Dirichlet-to-Neumann (DtN) map. This idea has been pursued for the wave equation and Schr\"{o}dinger equation in  \cite{Jiang2008, Alpert2002, Jiang2004}.  Our proposed method involves the following steps: 
\begin{enumerate}
\item [(a)] To convert the dynamics problem to a static one, we take the Laplace transform in time. 
\item [(b)] We reduce the computational domain by using an atomistic-based boundary element method (ABEM). This eliminates the degrees of freedom associated with the surrounding atoms. 
\item [(c)] The dynamic DtN map can be obtained by an inverse Laplace transform. Instead of implementing this exact ABC, we approximate the Laplace transform by rational functions.
\item [(d)] The rational function approximation reduces the nonlocal ABCs to local ODEs, which can be easily implemented.
\end{enumerate}
We emphasize that the use of the ABEM method in (a) allows us to treat domains with general geometry, including multi-connected domains. Meanwhile, since the full MD model is a Hamiltonian system, a naive approximation of the BCs can lead to a unstable model. Our observation is that the formulation via the DtN map in step (b) makes the stability 
analysis more amenable.  The rational approximation in step (d) eliminates the need to perform an inverse Laplace transform numerically. 


The layout of this paper is as follows. The exact ABC is presented in section \ref{sec:dtn}. We  discuss the evaluation of the DtN map, along with its approximation  in section \ref{sec:approx0}. The stability of the ABCs is established in section \ref{sec:stability}. The approximate ABC is extended to the nonlinear MD model under a partial-harmonic approximation in section \ref{sec:harmonic}. We present two numerical experiments to demonstrate the effectiveness of these methods in section \ref{sec:num}. 

\section{The formulation of Absorbing Boundary Conditions}
\label{sec:dtn}
Consider a system with $N$ atoms in a domain $\Omega$. Atoms in the domain have positions denoted by $\bm X$ and displacement $\bm u$. The entire domain $\Omega$ is divided into two regions $\Omega_{\i}$ and $\Omega_{\ii}$, $\Omega_{\i}\cup \Omega_{\ii} = \Omega$, as illustrated by Fig. \ref{fig:1}. Here $\Omega_\i$ can be multi-connected regions, e.g., around multiple local lattice defects.
\begin{figure}[!htp]
  \centering
  \includegraphics[scale=0.2]{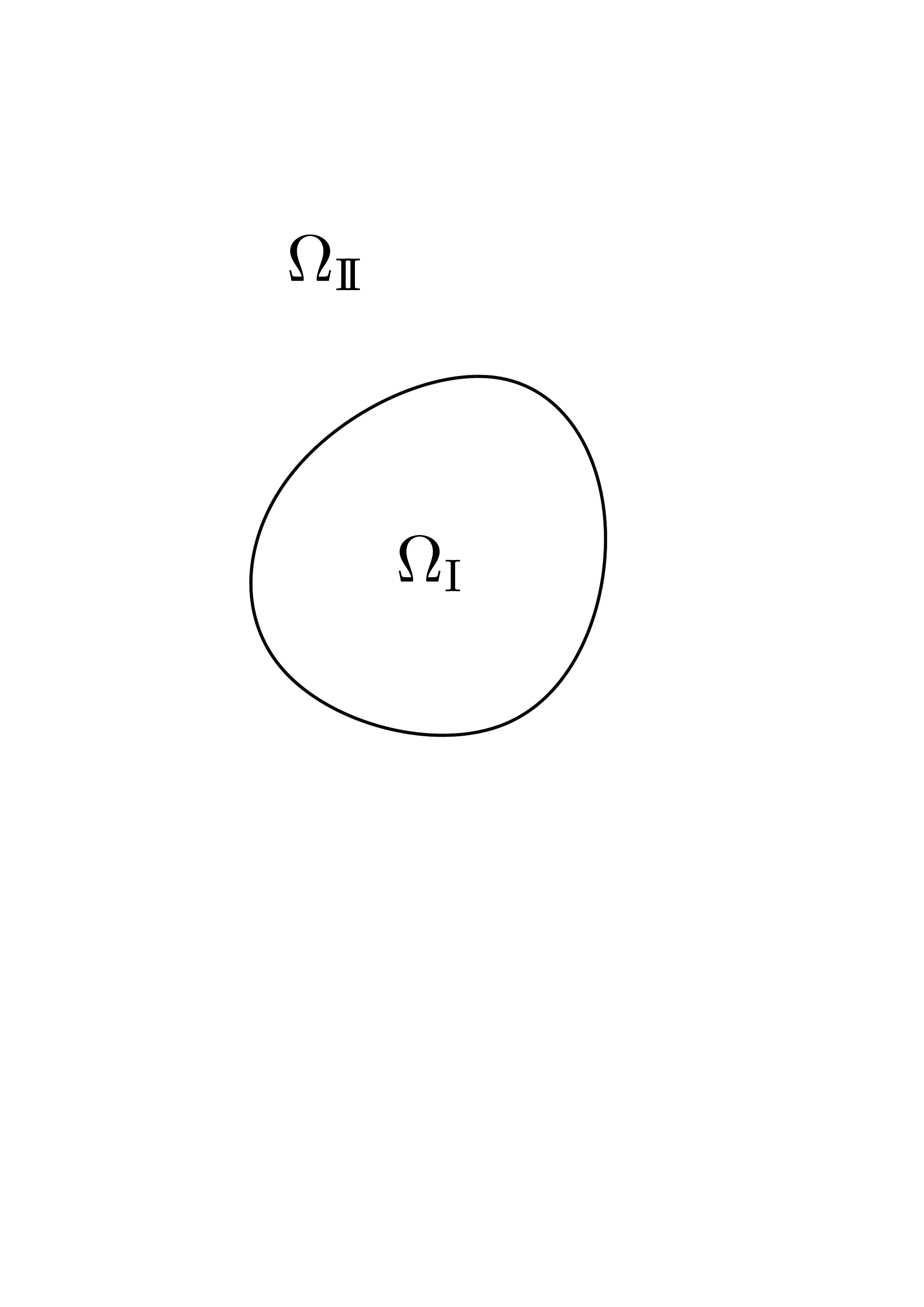}
  \caption{An illustration of the domain decomposition. }
  \label{fig:1}
\end{figure}
$\Omega_{\i}$ refers to the computational domain, where the MD model is actually being implemented, and $\Omega_{\ii}$ indicates the surrounding region that is to be removed.  We denote $\bm u_\i$ as the displacement of atoms in $\Omega_\i$ and $\bm u_\ii$ as the displacement of atoms in $\Omega_\ii$. $\dim(\bm u_\i)=n_\i$, and $\dim(\bm u_\ii)=n_\ii$. In realistic problems, $\Omega_{\ii}$ involves much more atoms than those in the region $\Omega_{\i}$., i.e., $n_\ii\gg n_\i.$

The ABCs can be derived when the interactions with $u_\ii$ are linear. The resulting BC would be effective, as long as such approximations are acceptable. For the clarity of our formulation, we first assume that {\it all} the interactions are linear (In  section \ref{sec:harmonic}, we will demonstrate that the formulation can be easily extended to the case where the interactions in $\Omega_\i$ is nonlinear). 
This is the case where the formulation of the ABCs is most transparent. In this case, the potential energy of the system is quadratic, that is,
\begin{equation}
  \label{eq:1}
  V = \frac{1}{2}\bm u^TD\bm u,
\end{equation}
where $D_{ij}=\frac{\partial^2 V}{\partial \bm u_i\partial \bm u_j}$ corresponds to the Hessian matrix  of  the exact potential energy $V$ in the MD model.  Typically these models have  short-range interactions, with the  cut-off radius denoted by $r_{\text{cut}}$.  This implies that $D = [D_{ij}]$ is a sparse matrix. In general, $D$ is symmetric positive definite, due to the stability requirement.


Atoms in region $\Omega_{\i}$ and $\Omega_{\ii}$ follow the equations of motion,
\begin{equation}
  \label{eq:lds}
  \left\{
    \begin{aligned}
      \ddot {\bm u}_{\i} &= -D_{\ici}\bm u_{\i} - D_{\icii}\bm u_{\ii},\\
      \ddot {\bm u}_{\ii} &= -D_{\iici}\bm u_{\i} - D_{\iicii}\bm u_{\ii}.
    \end{aligned}
  \right.
\end{equation}
Here, we set the mass to unity and defined $D_\ici=[D_{ij}]$, where $\bm X_i\in \Omega_\i$, and $\bm X_j\in \Omega_\i$. $D_\icii$, $D_{\iici}$, and $D_{\iicii}$ are defined similarly. We have $D_\icii = D_\iici^T$.
The dynamics \eqref{eq:lds} is stable, since the coefficient matrix
\begin{equation*}
  \begin{bmatrix}
    -D_{\ici}&-D_{\icii}\\
    -D_{\iici}&-D_{\iicii}
  \end{bmatrix}
\end{equation*}
is symmetric negative definite.

To build up the Dirichlet-to-Neumann (DtN) map, we take the Laplace transform of the second equation of \eqref{eq:lds} and get,
\begin{equation}
  \label{eq:2}
  s^2\bm U_{\ii}=-D_{\iici}\bm U_{\i}-D_{\iicii}\bm U_{\ii}, \quad s > 0,
\end{equation}
where $\bm U_{\i}(s) = \mathcal L\{\bm u_{\i}(t)\}$, and $\bm U_{\ii}(s)=\mathcal L\{\bm u_{\ii}(t)\}$, which are the Laplace transforms of $\bm u_{\i}(t)$ and $\bm u_{\ii}(t)$, respectively. Here, we assume that $\bm u_\i$ and $\bm u_\ii$ are bounded functions, such that the Laplace transformations are well-defined. In the Laplace domain, we are able to formally solve the equations because $(s^2I+D_{\iicii})$ is positive definite. The solution of Eq. \eqref{eq:2} is the mapping from $\bm U_{\ii}$ to $\bm U_\i$, that is,
\begin{equation}
  \label{eq:3}
  \bm U_{\ii}=-(s^2I+D_{\iicii})^{-1}D_{\iici}\bm U_\i.
\end{equation}
In the time domain, Eq. \eqref{eq:3} becomes,
\begin{equation}
  \label{eq:40}
  \bm u_\ii=\int_0^t\beta(t-s)\bm u_\i(s) ds,
\end{equation}
where $\beta(s) = \mathcal {L}^{-1}\{B(t)\}$. Clearly, the matrix $\beta$ has $n_\ii$ rows, and it is too large to work with. Fortunately, the interactions have short range and this representation can be simplified to only involve atoms close to the boundary.
\begin{figure}[!htp]
  \centering
  \includegraphics[scale=0.15]{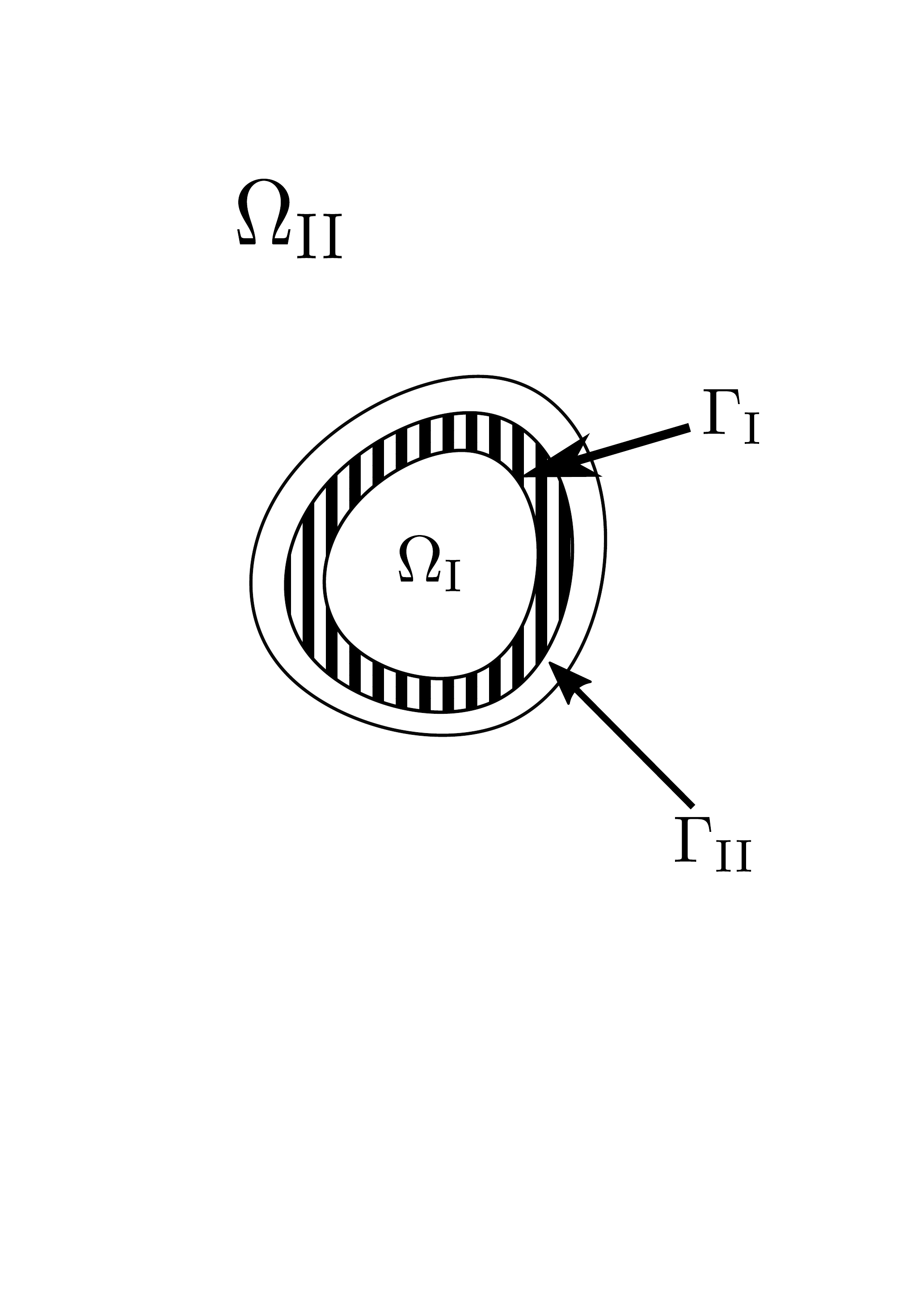}
  \caption{An illustration of the interface regions $\Gamma_\i$ and $\Gamma_\ii$.}
  \label{fig:2}
\end{figure}

The influence of $\bm u_\ii$ on $\bm u_\i$, $\bm f_\icii=-D_{\icii}\bm u_\ii$, corresponds to the Neumann boundary condition on $\Omega_\i$. Similarly, we define $\bm f_\iici=-D_{\iici}\bm u_\i$ as the influence of $\bm u_\i$ on $\bm u_\ii$. Due to the short-range interactions and the translational invariance of the force constant matrices among the atoms, $D_\icii$ is a sparse matrix, that is, most entries in the matrices $D_{\icii}$ and $D_{\iici}$ are zeros. The following notations are motivated by the domain decomposition method, and they are useful to reveal the sparsity of the matrices and the nearsightedness of the interactions. We define the sets of boundary atoms as $$\Gamma_\i = \{i \in \Omega_\i\vert D_{ij}\ne 0 , \text{ for any } j \in \Omega_\ii\}, \text{ and }\Gamma_\ii=\{i\in \Omega_\ii \vert D_{ij}\ne 0 , \text{ for any } j \in \Omega_\i\}.$$ They are the collections of the atoms at the inner and outer boundaries, as indicated in Fig. \eqref{fig:2}. We now use $\bm u_{\iciii}$ and $\bm u_{\iiciii}$ to represent the displacement of atoms in $\Gamma_\i$ and $\Gamma_\ii$, respectively. $\bm u_{\iciii}$ has a dimension of $m_\i$, and $\bm u_{\iiciii}$ has a dimension of $m_\ii$. We have $m_\i$, $m_\ii$ $\ll$ $n_\i$.  With these notations,  the short-range interactions can be revealed by writing the two matrices as 
\begin{equation}
  \label{eq:12}
  D_{\icii} =
  \begin{bmatrix}
    D_{\iciii} & \mathbf 0
  \end{bmatrix},
  \text{ and }
  D_{\iici} =
  \begin{bmatrix}
    D_{\iiciii} & \mathbf 0
  \end{bmatrix}.
\end{equation}
It is also natural to introduce the matrices $E_\i$ and $E_\ii$ such that $\bm u_{\iciii}=\bm u_\i\vert_{\Gamma_\i}=E_\i\bm u_\i$, and $\bm u_{\iiciii}=\bm u_\ii\vert_{\Gamma_\ii}=E_\ii\bm u_\ii$. Furthermore, we define $\bm f_{\iciii}=-E_\i D_{\iciii}\bm u_{\iiciii}$, and $\bm f_{\iiciii}=-E_\ii D_{\iiciii}\bm u_{\iciii}$ to denote the forces at the inner and outer boundaries. 

\medskip

\begin{table}[h]
  \centering
  {\footnotesize
    \begin{tabular}[h]{l|l|l}
      \hline
      equations & kernel functions & interpretations\\
      \hline
      $\bm U_{\iiciii} = B\bm U_{\iciii}$ & $B(s) = -E_\ii(s^2I+D_{\iicii})^{-1}D_{\iiciii}$ & \makecell[cl]{ABEM \cite{Li2012, Wu2017}\\ section: \ref{sec:eval}}\\
      \hline
      $\bm U_{\iiciii} = \Theta \bm F_{\iiciii}$ & $\Theta(s) = E_\ii(s^2I+D_{\iicii})^{-1}E_\ii^T$& \makecell[cl]{Target kernel \\ to approximate \\ NtD map}\\
      \hline
      $\bm F_{\iciii} = T \bm U_{\iciii},$  & $T(s) = D_{\iiciii}(s^2I+D_\iicii)^{-1}D_{\iiciii}^T$ & \makecell[cl]{$\text{DtN map in }\Omega_\ii$\\ \cite{Oberai1998}}\\
      \hline
      $\bm F_{\iciii} = K s\bm U_{\iciii} + \Theta(0)\bm U_{\iciii}$ & $\displaystyle K(s) = \frac{\Theta(s)-\Theta(0)}{s}$ & \makecell[cl]{GLE in $\Omega_\ii$\\ \cite{Doll1976, Adelman1974}}\\
      \hline
    \end{tabular}
    \caption{Kernel functions in the Laplace domain.}
    \label{tab:1}
  }
\end{table}  

We now turn to the general forms of the boundary conditions. There are various ways to express the boundary conditions based on  Eq. \eqref{eq:3}. Their forms in the Laplace domain are summarized in Table \ref{tab:1}, along with references where these representations can be found. We will briefly go through the derivation of these formulas.

The first kernel function can be immediately obtained if we left-multiply both sides of Eq. \eqref{eq:3} by $E_\ii$.  This kernel function corresponds to a mapping between the displacement at the inner and outer boundaries. 

\medskip
To derive the second representation, one notices that the influence of $\bm u_\i$ on $\bm u_\ii$ can be expressed as
\begin{equation}
  \label{eq:39}
  \bm f_{\iici} =
  \begin{bmatrix}
    \bm f_{\iiciii} \\ \bm 0
  \end{bmatrix}.
\end{equation}
Further notice that the right hand side of Eq. \eqref{eq:3} can be written as
\begin{equation}
  \label{eq:43}
  (s^2I+D_{\iicii})^{-1}\bm F_{\iici} =
  \begin{bmatrix}
    (s^2I+D_{\iicii})^{-1}E_\ii^T & * 
  \end{bmatrix}
  \begin{bmatrix}
    \bm F_{\iiciii}\\ \bm 0
  \end{bmatrix}
  = (s^2I+D_{\iicii})^{-1}E_\ii^T\bm F_{\iiciii},
\end{equation}
where $\bm F_{\iici}(s)=\mathcal L\{\bm f_{\iici}(t)\}$ is the Laplace transform of $\bm f_{\iici}(t)$, and $\bm F_{\icii}(s) =-D_{\icii}\bm u_{\ii}$ is the force from atoms in region $\Omega_\i$ to atoms in region $\Omega_\ii$. 

To proceed, we left-multiply both sides of Eq. \eqref{eq:3}, and we get
\begin{equation}
  \label{eq:44}
  \bm U_{\iiciii} = E_\ii(s^2I+D_{\iicii})^{-1}E_\ii^T\bm F_{\iiciii}.
\end{equation}
This way, only the atoms outside the boundary are involved in Eq. \eqref{eq:44}. This DtN map connects the displacement and traction at the outer boundary.  Numerically, the kernel function $\Theta(s)$ is amenable to numerical and analytical treatments, because it is symmetric positive definite for all $s$. (It is a principle submatrix of $(s^2I+D_\iicii)$.) It is much more convenient to approximate a symmetric positive definite matrix-valued matrix by interpolations. The matrix $(s^2I+D_\iicii)$ corresponds to a continuum screened Poisson operator or Klein-Gordon operator \cite{Alhaidari2006}.

\medskip

To arrive at the third expression, one can left-multiply $D_{\iiciii}$ to both sides of Eq. \eqref{eq:3}, which leads to,
\begin{equation}
  \label{eq:4}
  \bm F_{\iciii}(s)=D_{\iiciii}(s^2I+D_{\iicii})^{-1}D_{\iiciii}^T\bm U_{\iciii}(s).
\end{equation}
In this case, the DtN operator is given by $$T(s)=D_{\iiciii}(s^2I+D_{\iicii})^{-1}D_{\iiciii}^T,$$ and it is a mapping between the displacement and traction at the inner boundary.

In the real-time domain, this DtN map is written as
\begin{equation}
  \label{eq:5}
  \bm f_{\iciii}(t) = \int_0^t\tau(t-s)\bm u_{\iciii}(s)ds.
\end{equation}
Here $\tau(t) = \mathcal L^{-1}\{T(s)\}$ is the inverse Laplace transform of $T(s)$. Fourier transformation versions of the expression \eqref{eq:5} were discussed in \cite{Adelman1974, Doll1976, Li2006}.

\medskip
Finally, the fourth expression in Table 1  can be  obtained by writting,
\begin{equation}
  \label{eq:16}
   \bm F_{\iciii}(s) = \frac{T(s)-T(0)}{s}(s\bm U_{\iciii}(s)-\bm U_{\iciii}(0)) + T(0)\bm U_{\iciii}(s),
 \end{equation}
 and denote $K(s)=(T(s)-T(0))/s$. In the real-time domain, $\bm f_{\iciii}$, accordingly,  is expressed in terms of the {\it velocity} of the system. The alternative form of $\bm f_{\iciii}$ is 
 \begin{equation}
   \label{eq:17}
   \bm f_{\iciii} = \int_0^t\kappa(t-s)\dot{\bm u}_{\iciii}(s)ds - T(0)\bm u_{\iciii}(t).
 \end{equation}

 In the time domain, this derivation is equivalent to an integration by parts. In this form, the memory term can be viewed as a damping term, and it makes the stability analysis more straightforward \cite{Li08a}. 
 
 \medskip
 
In this paper, we choose the {\it second} expression of the DtN map, with kernel function $\Theta(s)$, to represent the boundary condition. 
The main reason is that the kernel function $\Theta(s)$ is symmetric positive definite for any value of $s> 0$, which may not hold in other three cases. Due to this property, some theoretical results can be easily established, and they are quite relevant to the stability of the boundary condition, which will be discussed in section \ref{sec:stability}. For convenience, we denote $\Theta_0 = \Theta(0)$, $\Theta_1=\Theta(s_1)$, and $\Theta_2=\Theta(s_2)$ for positive values of $s_1$ and $s_2$.

\begin{lemma}
  \label{lemma:1}
  For any  $s_1 \ne s_2 >0$, $(s_1^2(\Theta_1-\Theta_0)^{-1}-s_2^2(\Theta_2-\Theta_0)^{-1})/(s_1-s_2)$ is symmetric negative definite.
\end{lemma}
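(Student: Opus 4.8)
The plan is to reduce the whole statement to a single Loewner-monotonicity property of a positive-definite matrix function. Write $A = D_\iicii$, which is symmetric positive definite as a principal submatrix of the SPD matrix $D$, and abbreviate $E = E_\ii$. First I would simplify $\Theta(s) - \Theta_0$ with the resolvent identity $(s^2I+A)^{-1} - A^{-1} = -s^2(s^2I+A)^{-1}A^{-1}$, which gives
\[
\Theta(s) - \Theta_0 = -s^2\, G(s), \qquad G(s) := E\,A^{-1}(s^2 I + A)^{-1}E^T .
\]
Because $A^{-1}$ and $(s^2I+A)^{-1}$ commute and are positive definite, and because $E_\ii$ is a selection matrix of full row rank, $G(s)$ is symmetric positive definite for every $s>0$; in particular $\Theta(s)-\Theta_0$ is nonsingular and
\[
s^2\bigl(\Theta(s)-\Theta_0\bigr)^{-1} = -\,G(s)^{-1}.
\]

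The key observation is that the matrix in the lemma is precisely the difference quotient of $M(s) := s^2(\Theta(s)-\Theta_0)^{-1} = -G(s)^{-1}$, namely $\bigl(M(s_1)-M(s_2)\bigr)/(s_1-s_2)$. Hence it suffices to show that $M$ is strictly decreasing in the Loewner order, i.e. $s_1 > s_2 \Rightarrow M(s_1) \prec M(s_2)$; the case $s_2 > s_1$ then follows automatically since the numerator and the denominator both change sign.

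To establish the monotonicity of $M$, I would first show that $G$ is strictly decreasing. Applying the resolvent identity once more,
\[
G(s_2) - G(s_1) = (s_1^2 - s_2^2)\, E\,A^{-1}(s_2^2 I + A)^{-1}(s_1^2 I + A)^{-1}E^T ,
\]
and for $s_1 > s_2 > 0$ the inner matrix is a product of commuting positive-definite factors, so the congruence by the full-row-rank $E$ is strictly positive definite; thus $G(s_1) \prec G(s_2)$. Invoking the antitonicity of matrix inversion, $0 \prec X \prec Y \Rightarrow Y^{-1} \prec X^{-1}$, I get $G(s_1)^{-1} \succ G(s_2)^{-1}$, hence $M(s_1) = -G(s_1)^{-1} \prec -G(s_2)^{-1} = M(s_2)$. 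Dividing the resulting negative-definite numerator by the positive $s_1-s_2$ yields the claimed strict negative definiteness (symmetry being clear throughout).

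Each step is routine, and the main point requiring care is invertibility together with strictness: I must verify that $\Theta(s)-\Theta_0$ is genuinely nonsingular rather than merely negative semidefinite, and that the final definiteness is strict. Both hinge on $E_\ii$ having full row rank, so that the congruence $E(\cdot)E^T$ of a positive-definite matrix stays positive definite. As a safeguard and a fully explicit alternative, I would diagonalize $A = Q\Lambda Q^T$ and write $\Theta(s) = \sum_i w_i w_i^T/(s^2+\lambda_i)$ with $w_i$ the columns of $E_\ii Q$ and $\lambda_i > 0$; this makes every monotonicity claim transparent term by term and confirms that $G(s) = \sum_i w_i w_i^T/[\lambda_i(s^2+\lambda_i)]$ is positive definite exactly when the $w_i$ span, i.e. when $E_\ii$ has full row rank.
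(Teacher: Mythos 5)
Your proof is correct, and it reaches the conclusion by a genuinely different mechanism than the paper's. Both arguments begin with the same resolvent computation, showing $\Theta(s)-\Theta_0=-s^2E_\ii(s^2I+D_\iicii)^{-1}D_\iicii^{-1}E_\ii^T$ is symmetric negative definite; but the paper then sandwiches the target matrix as $(\Theta_1-\Theta_0)^{-1}\,N\,(\Theta_2-\Theta_0)^{-1}$, reduces the middle factor $N$ to the resolvent difference quotient $M=\bigl((s_1^2I+D_\iicii)^{-1}-(s_2^2I+D_\iicii)^{-1}\bigr)/(s_1-s_2)$ pushed through the congruence by $E_\ii$, and simply asserts that the resulting product is negative definite. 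You instead derive the closed form $s^2(\Theta(s)-\Theta_0)^{-1}=-G(s)^{-1}$ with $G(s)=E_\ii D_\iicii^{-1}(s^2I+D_\iicii)^{-1}E_\ii^T$, recognize the lemma's matrix as the Loewner difference quotient of $M(s)=-G(s)^{-1}$, and finish with strict monotonicity of $G$ plus antitonicity of inversion ($0\prec X\prec Y\Rightarrow Y^{-1}\prec X^{-1}$). Your route buys rigor exactly where the paper is thinnest: the paper's final inference---that $S=XNY$ is negative definite when $X=(\Theta_1-\Theta_0)^{-1}\ne Y=(\Theta_2-\Theta_0)^{-1}$ and $N$ are all negative definite and $S$ happens to be symmetric---is true but is \emph{not} a congruence, since the outer factors differ; making it airtight requires an inertia argument, e.g.\ writing $P=-X$, $R=-N$, $Q=-Y$ and noting that the symmetric matrix $Q^{-1/2}(-S)Q^{-1/2}$ is similar to $PR$, a product of two symmetric positive definite matrices, hence has positive spectrum. (The paper's displayed middle factor also contains a dimensional slip: $E_\ii M E_\ii^T D_\iicii^{-1}$ should read $E_\ii M D_\iicii^{-1}E_\ii^T$, using that $M$ and $D_\iicii^{-1}$ commute.) Your formulation avoids all of this---every step is either a congruence by the full-row-rank $E_\ii$ or a standard operator-monotonicity fact---and it handles both sign cases $s_1\gtrless s_2$ automatically, while also making explicit the rank condition on $E_\ii$ that the paper uses tacitly for invertibility of $\Theta_i-\Theta_0$ and for strictness. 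What the paper's sandwich technique buys in exchange is uniformity: the identical factor-and-sandwich manipulation is reused verbatim in Lemma~\ref{lemma:2}, whereas your monotonicity argument would need to be adapted there (the relevant function $s(\Theta(s)-\Theta_0)^{-1}$ is monotone only on the restricted range $0<s<\sqrt{\lambda_{\text{min}}(D_\iicii)}$, consistent with that lemma's hypothesis).
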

\begin{proof}
  Let us consider
  \begin{equation}
    \label{eq:37}
    M = \frac{(s_1^2I+D_{\iicii})^{-1}-(s_2^2I+D_{\iicii})^{-1}}{s_1-s_2}.
  \end{equation}
  which is symmetric.
  Now, we show $M$ is also negative definite. In fact, we can write $M$ into a product of two matrices,
   \begin{equation}
     \label{eq:27}
     \begin{aligned}      
       M = (s_1^2I+D_{\iicii})^{-1}\frac{s_2^2-s_1^2}{s_1-s_2}(s_2^2+D_{\iicii})^{-1}.
     \end{aligned}
   \end{equation}
   Both two matrices $(s_1^2I+D_{\iicii})^{-1}$ and $(s_2^2I+D_{\iicii})^{-1}$ are symmetric positive definite. Since $M$ is symmetric, it follows immediately that $M$ and $E_\ii ME_\ii^T$ are symmetric negative definite.
   Now, we consider
   \begin{equation}
     \label{eq:25}
     \begin{aligned}
       &\frac{s_1^2(\Theta_1-\Theta_0)^{-1}-s_2^2(\Theta_2-\Theta_0)^{-1}}{s_1-s_2}\\
      = & (\Theta_1-\Theta_0)^{-1}\frac{s_1^2(\Theta_2-\Theta_0)-s_2^2(\Theta_1-\Theta_0)}{s_1-s_2}(\Theta_2-\Theta_0)^{-1}\\
      = &(\Theta_1-\Theta_0)^{-1}s_1^2s_2^2E_\ii ME_\ii^T D_{\iicii}^{-1}(\Theta_2-\Theta_0)^{-1}.
    \end{aligned}
  \end{equation}
  Here $E_\ii ME_\ii^TD_{\iicii}^{-1}$ is negative definite. In addition, $(\Theta_1-\Theta_0)^{-1}$ and $(\Theta_2-\Theta_0)^{-1}$ are symmetric negative definite. As a result, $(s_1^2(\Theta_1-\Theta_0)^{-1}-s_2^2(\Theta_2-\Theta_0)^{-1})/(s_1-s_2)$ is symmetric negative definite.
\end{proof}

\begin{lemma}
  \label{lemma:2}
  If $0< s_1\ne s_2 < \sqrt{\lambda_{\text{min}}(D_{\iicii})}$, then $s_1(\Theta_2-\Theta_0)^{-1}-s_2(\Theta_1-\Theta_0)^{-1}/(s_1-s_2)$ is positive definite.
\end{lemma}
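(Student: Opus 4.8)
The plan is to imitate the proof of Lemma~\ref{lemma:1}: write the matrix in question as a triple product whose outer factors are $A_1:=(\Theta_1-\Theta_0)^{-1}$ and $A_2:=(\Theta_2-\Theta_0)^{-1}$ (both symmetric negative definite, as was shown inside the proof of Lemma~\ref{lemma:1}), and reduce the inner factor, through the resolvent identity for $(s^2I+D_\iicii)^{-1}$, to a single symmetric matrix built from $D_\iicii$ whose definiteness is controlled by the bound $s_1,s_2<\sqrt{\lambda_{\text{min}}(D_\iicii)}$. I abbreviate $R_i:=(s_i^2I+D_\iicii)^{-1}$.

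First I would produce the factorization. Using $A_1(\Theta_1-\Theta_0)=I$ and $(\Theta_2-\Theta_0)A_2=I$,
\begin{equation*}
  N=\frac{s_1(\Theta_2-\Theta_0)^{-1}-s_2(\Theta_1-\Theta_0)^{-1}}{s_1-s_2}=\frac{s_1A_2-s_2A_1}{s_1-s_2}=A_1PA_2,\qquad P=\frac{s_1(\Theta_1-\Theta_0)-s_2(\Theta_2-\Theta_0)}{s_1-s_2}.
\end{equation*}
This is the exact analogue of the first factorization in Eq.~\eqref{eq:25}, the only change being that the coefficients are now linear rather than quadratic in $s$.

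Next I would collapse $P$ onto $D_\iicii$. From $(s^2I+D_\iicii)^{-1}-D_\iicii^{-1}=-s^2(s^2I+D_\iicii)^{-1}D_\iicii^{-1}$ one gets $\Theta(s)-\Theta_0=-s^2E_\ii R(s)D_\iicii^{-1}E_\ii^T$, and substituting, together with the factorization of the divided difference used for $M$ in Eq.~\eqref{eq:27},
\begin{equation*}
  P=-E_\ii\left(\frac{s_1^3R_1-s_2^3R_2}{s_1-s_2}\right)D_\iicii^{-1}E_\ii^T=-E_\ii R_1\left(s_1^2s_2^2I+(s_1^2+s_1s_2+s_2^2)D_\iicii\right)R_2\,D_\iicii^{-1}E_\ii^T.
\end{equation*}
The isolated middle matrix $s_1^2s_2^2I+(s_1^2+s_1s_2+s_2^2)D_\iicii$ is symmetric positive definite for all positive $s_1,s_2$, so (the remaining factors being commuting positive definite functions of $D_\iicii$ flanked by $E_\ii$) the inner factor is negative definite without any restriction on $s_1,s_2$.

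This is where the main obstacle lies, and it is the only place the spectral hypothesis could enter. For the pairing as written the bound $s_i<\sqrt{\lambda_{\text{min}}(D_\iicii)}$ is never consumed, and the sign bookkeeping must be tracked with care: the factor that the hypothesis is designed to control, $D_\iicii-s_1s_2I$---positive definite exactly when $s_1s_2<\lambda_{\text{min}}(D_\iicii)$, which the bound guarantees---appears instead through the divided difference $\tfrac{s_1R_1-s_2R_2}{s_1-s_2}=R_1(D_\iicii-s_1s_2I)R_2$, i.e.\ only after the powers of $s$ recombine so that each coefficient $s_i$ is matched with the resolvent $R_i$ of the same argument at the outer level, as in $(s_1(\Theta_1-\Theta_0)^{-1}-s_2(\Theta_2-\Theta_0)^{-1})/(s_1-s_2)$. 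Establishing which recombination is operative, and that the resulting inner matrix is positive definite, is the crux; once that is settled the conclusion follows as in Lemma~\ref{lemma:1}, the only remaining subtlety being that $A_1$ and $A_2$ do not commute, so the positivity of the symmetric triple product $A_1PA_2$ rests on the same congruence-type reasoning used there rather than on simultaneous diagonalization.
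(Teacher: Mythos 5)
Your algebra is correct, and the ``main obstacle'' you flag at the end is genuine --- but it is a typo in the lemma statement, not a missing idea on your side. Taken literally, with the crossed pairing $s_1(\Theta_2-\Theta_0)^{-1}-s_2(\Theta_1-\Theta_0)^{-1}$, your factorization $A_1PA_2$ and the collapse of $P$ onto $-E_\ii R_1\left(s_1^2s_2^2I+(s_1^2+s_1s_2+s_2^2)D_\iicii\right)R_2D_\iicii^{-1}E_\ii^T$ (with $R_i=(s_i^2I+D_\iicii)^{-1}$, as in your notation) are both right, and they show that the crossed matrix is symmetric \emph{negative} definite for every $0<s_1\ne s_2$; a scalar sanity check with $D_\iicii=d>0$, $E_\ii=1$ gives $-d\,\frac{s_1^2s_2^2+d(s_1^2+s_1s_2+s_2^2)}{s_1^2s_2^2}<0$. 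So the statement as printed is false and, as you observed, never consumes the spectral hypothesis. The intended matrix is the matched pairing $\left(s_1(\Theta_1-\Theta_0)^{-1}-s_2(\Theta_2-\Theta_0)^{-1}\right)/(s_1-s_2)$ --- the pairing used in Lemma \ref{lemma:1} and, crucially, in Eq.~\eqref{eq:26} where this lemma is invoked. The paper's own proof carries the same index swap in the first line of Eq.~\eqref{eq:28}: its second line, $(\Theta_1-\Theta_0)^{-1}\frac{s_1(\Theta_2-\Theta_0)-s_2(\Theta_1-\Theta_0)}{s_1-s_2}(\Theta_2-\Theta_0)^{-1}$, equals the \emph{matched} divided difference, not the crossed expression above it. From there the paper does exactly what you identified as the operative recombination: substituting $\Theta_i-\Theta_0=-s_i^2E_\ii R_iD_\iicii^{-1}E_\ii^T$ produces the inner factor $s_1s_2E_\ii NE_\ii^TD_\iicii^{-1}$ with $N=\frac{s_1R_1-s_2R_2}{s_1-s_2}=R_1(D_\iicii-s_1s_2I)R_2$ (Eqs.~\eqref{eq:38}--\eqref{eq:29}), and the hypothesis enters precisely where you predicted, through $s_1s_2<\lambda_{\text{min}}(D_\iicii)$ (indeed this weaker product condition is all the argument needs).

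The remaining subtlety you raise --- $A_1$ and $A_2$ do not commute, so the definiteness of the symmetric triple product is not immediate --- is glossed over in the paper's proofs of both lemmas, but it closes in one line: if $S=C_1QC_2$ is symmetric with $C_1,Q,C_2$ symmetric positive definite, then $C_1^{-1/2}SC_1^{-1/2}=\left(C_1^{1/2}QC_1^{1/2}\right)\left(C_1^{-1/2}C_2C_1^{-1/2}\right)$ is symmetric and a product of two positive definite matrices, hence has positive eigenvalues, and congruence transfers positive definiteness back to $S$; writing $A_i=-C_i$ handles the signs. In summary, your proposal is methodologically identical to the paper's proof (divided-difference factorization plus the resolvent identity), correctly refutes the statement as literally written, and pinpoints where the hypothesis must act; to finish, restate the lemma with the matched pairing and append the congruence argument above.
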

\begin{proof}
  We will use the same technique as the proof of the last lemma. Let us consider
  \begin{equation}
    \label{eq:38}
    N = \frac{s_1(s_1^2I+D_{\iicii})^{-1}-s_2(s_2^2I+D_{\iicii})^{-1}}{s_1-s_2},
  \end{equation}
  which is symmetric.
  
  Left-multiplying $N$ by $(s_1^2I+D_{\iicii})$ and right-multiplying $N$ by $(s_2I+D_{\iicii})$, we get
  \begin{equation}
    \label{eq:29}
    \begin{aligned}
      N=& (s_1^2I+D_{\iicii})^{-1}\frac{s_1(s_2^2I+D_{\iicii})-s_2(s_1^2I+D_{\iicii})}{s_1-s_2}(s_2^2+D_{\iicii})^{-1}\\
      =&  (s_1^2I+D_{\iicii})^{-1}\frac{(s_1-s_2)(-s_1s_2I+D_{\iicii})}{s_1-s_2}(s_2^2+D_{\iicii})^{-1}.\\
    \end{aligned}
  \end{equation}
  Since $s_1< \sqrt{\lambda_{\text{min}}(D_{\iicii})}$ and $s_2<\sqrt{\lambda_{\text{min}}(D_{\iicii})}$,
  $N$ is symmetric positive definite. Therefore, 
  \begin{equation}
    \label{eq:28}
    \begin{aligned}
      &\frac{s_1(\Theta_2-\Theta_0)^{-1}-s_2(\Theta_1-\Theta_0)^{-1}}{s_1-s_2}\\
      =& (\Theta_1-\Theta_0)^{-1}\frac{s_1(\Theta_2-\Theta_0)-s_2(\Theta_1-\Theta_0)}{s_1-s_2}(\Theta_2-\Theta_0)^{-1}\\
      =& (\Theta_1-\Theta_0)^{-1}s_1s_2E_\ii NE_\ii^T D_{\iicii}^{-1}(\Theta_2-\Theta_0)^{-1}\\
    \end{aligned}
  \end{equation}
  is symmetric positive definite.
\end{proof}
\section{Approximations of the DtN map $\Theta $ }
\label{sec:approx0}
The direct evaluation of the DtN operator $\Theta$ is quite challenging. The difficulties of computation lie in two aspects: the inverse of a large matrix and infinitely many evaluations for frequency parameter $s$. In this section, the exact DtN map will be approximated by rational functions (section \ref{sec:approx}). We avoid the inverse of the large matrix by a domain reduction of the system. The domain reduction, in the form  of the DtN map in the Laplace domain, will be discussed in section \ref{sec:eval}. 
\subsection{Approximation by Rational Functions}
\label{sec:approx}
Instead of evaluating the time-dependent memory kernel, rational approximations can be made \cite{Engquist1977}, which leads  to  dynamics without memory. In fact,  the memory is eliminated via the introduction of a new variable. 
We consider the general form of the rational functions,
\begin{equation}
  \label{eq:41}
  R_{n,n}(s)=(s^{n}I-s^{n-1}B_0-\cdots-B_{n-1})^{-1}(s^{n-1}A_0+\cdots+A_{n-1}), 
\end{equation}
where $A_i, B_i \in \mathbb R^{m_\ii\times m_\ii}$ for all $0\leq i<n$.  Notice that the dimension of the matrices $A_i$ and $B_i$ is small: $m_\ii\ll n_\ii$. The rational functions should satisfy $R_{n,n}(s)\approx\Theta(s)$. The form of the rational functions is inspired by the properties of the kernel function $\Theta(s)$: $\lim_{s\rightarrow \infty}\Theta(s) = 0$, and $\Theta'(0)=0$. In our approximation, $\lim_{s\rightarrow \infty}R_{n,n}(0)=0$ is automatically satisfied. For all following approximations, we always assume $R_{n,n}(0)=\Theta(0)$. This condition gives us
\begin{assumption}
  \label{assump:1}
  $-B_{n-1}^{-1}A_{n-1}=\Theta_0.$
\end{assumption}

For higher order approximations, we may assume $R_{n,n}'(0)=\Theta'(0)$ to obtain the better approximation around zero. The first derivative condition leads to
\begin{assumption}
  \label{assump:2}
  $-B_{n-1}^{-1}B_{n-2}B_{n-1}^{-1}A_{n-1}+B_{n-1}^{-1}A_{n-2}=0$.
\end{assumption}

In this paper, we only show zeroth, first, and second order approximations. It is straightforward to extend the idea to high order cases. We define the approximation order as the degree of the polynomial in the denominator. 

{\bf Zeroth order approximation.} The zeroth order approximation is to use a constant matrix, which, according to Assumption \ref{assump:1}, becomes $R_{0,0}=\Theta(0)$. One can also choose $s=s_0$ ($s_0$ is nonzero) as long as the corresponding dynamics is stable. In this paper, we make the intuitive choice $s=0$. The approximate dynamics is reduced to
\begin{equation}
  \label{eq:0th}
  \ddot{\bm u}_{\i}=-D_{\ici}\bm u_\i+E_\i^TD_{\iiciii}^TD_{\iicii}^{-1}D_{\iiciii}\bm u_{\iciii}.
\end{equation}
The stability of this approximation is shown in section \ref{sec:stability}. The dynamics can be further simplified by introducing $\overbar D_{\ici}=D_{\ici}-E_\i^TD_{\iiciii}^TD_{\iicii}^{-1}D_{\iiciii}E_\i$. In practice, the interactions in $\Omega_\i$ are nonlinear, and $\overbar D_\ici$ never has to be computed. This will be discussed in the next section.  Nonetheless, this Schur complement form gives us the insight of the stability of the resulting dynamical system. The zeroth order approximation only involves the displacement of atoms in region $\Omega_\i$. 

{\bf First order approximation.} In this case, we consider the rational function,
\begin{equation}
  \label{eq:R11}
  R_{1,1}(s)=(sI-B_0)^{-1}A_0,
\end{equation}
as the approximation. The coefficients $A_0$ and $B_0$ are determined by the interpolation at points $\left(s_0, \Theta(s_0)\right)$ and  $\left(0, \Theta(0)\right)$, in accordance with  Assumption \ref{assump:1}.
The corresponding approximate DtN map in real-time domain can be written as
\begin{equation}
  \label{eq:7}
  \begin{aligned}
    \bm u_{\iiciii} &= \int_0^te^{sB_0}A_0\bm f_{\iiciii}(t-s)ds\\
    &= -B_0^{-1}A_0\bm f_{\iiciii}(t) + \int_0^te^{sB_0}B_0^{-1}A_0\dot{\bm f}_{\iiciii}(t-s)ds,
  \end{aligned}
\end{equation}
which is similar to the alternative form (Eq. \eqref{eq:17}) of the exact DtN map. We reformulate the approximate DtN map by integration by parts for the purpose of stability analysis in section \ref{sec:stability}. It is worth pointing out that, in this setting, $\bm u_{\iiciii}$ is expressed in terms of $\dot{\bm f}_{\iiciii}$. The term $-B_0^{-1}A_0\bm f_{\iiciii}$, which according to Assumption \ref{assump:1}, coincides with $\Theta(0) \bm f_{\iiciii}$ that  will be moved to the first equation in Eq. \eqref{eq:lds}. We denote $\bm g=\int_0^te^{sB_0}B_0^{-1}A_0\dot{\bm f}_{\iiciii}(t-s)ds$ and append the dynamics of $\bm g$ to that of $\bm u_\i$. The variable $\bm g$ ($\bm g \in \mathbb R^{m_\ii\times 1}$) is coupled with $\bm v_{\iciii}$ only. With direct computation, one can verify that the approximate dynamics in this case is expressed as
\begin{equation}
  \label{eq:1st}
  \left\{
    \begin{aligned}
      \ddot{\bm u}_{\i} &=-\overbar D_{\ici}\bm u_\i - D_{\iciii}\bm g,\\
      \dot{\bm g}&= -B_0^{-1}A_0D_{\iiciii}\bm v_{\iciii} + B_0\bm g,
    \end{aligned}
  \right.
\end{equation}
where $\overbar D_\ici$ has the same definition as the one in the zeroth order approximation.

{\bf Second order approximation.}
We express the second order rational approximation as
\begin{equation}
  \label{eq:11}
  R_{2,2}(s)=(s^2I-sB_0-B_1)^{-1}(sA_0+A_1).
\end{equation}
The coefficients $A_0$, $A_1$, $B_0$, and $B_1$ are determined by the interpolation among the values of $\Theta(s)$ and $\Theta'(s)$. In the second order approximation, we take both assumption \ref{assump:1} and assumption \ref{assump:2} into consideration. Two more values of $\Theta(s)$ will be used, at points $(s_1, \Theta(s_1))$ and $(s_2,\Theta(s_2))$, to determined those coefficients. In the real time domain, the corresponding approximate dynamics is
\begin{equation}
  \label{eq:8}
  \ddot{\bm u}_{\iiciii} = B_0\dot{\bm u}_{\iiciii} + B_1\bm u_{\iiciii} + A_0\dot{\bm f}_{\iiciii} + A_1\bm f_{\iiciii}.
\end{equation}
Equivalently, the above dynamics can be expressed as
\begin{equation}
  \label{eq:9}
  \dot{\bm w}= B\bm w + A\bm f_{\iiciii},
\end{equation}
where $\bm w =
\begin{bmatrix}
  \bm u_{\iiciii} \\ \bm z
\end{bmatrix}
$,
$ B = 
\begin{bmatrix}
  B_0 & I\\ B_1 & \mathbf 0
\end{bmatrix},
$ and $ A = 
\begin{bmatrix}
  A_0\\ A_1
\end{bmatrix}
$. Employing the same technique in Eq. \eqref{eq:7}, we obtain
\begin{equation}
  \label{eq:10}
  \bm w = -B^{-1}A\bm f_{\iiciii}(t) + \int_0^te^{sB}B^{-1}A\dot{\bm f}_{\iiciii}(t-s)ds.
\end{equation}
We denote $\bm g=\int_0^te^{sB}B^{-1}AD_{\iiciii}\dot{\bm f}_{\iciii}(t-s)ds$ and introduce the matrix $E_1$ (different from $E_\i$) such that $\bm f_{\icii}=E_1\bm g$. Then we can write an extended dynamics to represent the ABC:
\begin{equation}
  \label{eq:2nd}
  \left\{
    \begin{aligned}
      \ddot{\bm u}_{\i} &=-\overbar D_{\ici}\bm u_{\i} - E_1 D_{\iciii}\bm g,\\
      \dot{\bm g}&= -B^{-1}AD_{\iiciii}\bm v_{\iciii} + B\bm g.
    \end{aligned}
  \right.
\end{equation}

\subsection{Evaluation of the DtN map}
\label{sec:eval}
In practice, the  sub-region $\Omega_{\ii}$ contains many atoms, but the region $\Omega_\i$ contains much fewer atoms. As a result, the matrix $\Theta(s) \in \mathbb R^{m_\ii\times m_\ii}$ associated with the DtN map does not have a large dimension. However, it involves  the inverse of $(s^2I+D_{\ii}) \in \mathbb R^{n_\ii\times n_\ii}$. The direction computation can be extremely expensive.

Fortunately, it is unnecessary to compute the entire inverse of $(s^2I+D_{\iicii})$. We only need to compute the Schur complement of the corresponding block of $(s^2I+D_{\iicii})$ for the  atoms at the interface $\Gamma$. Thanks to the translation invariance of the force constant matrices, this calculation can be done very efficiently. We first consider the problem in the Laplace domain,
\begin{equation}
  \label{eq:14}
  \widetilde D_{\iicii}(s)\bm U_\ii(s) = \bm F_{\iici}(s), \text{ for given } \bm F_{\iici}(s) = -D_{\iici}\bm U_\i(s),
\end{equation}
where $\widetilde D_{ij} = D_{ij} + s^2\delta_{ij}I$. With the help of the lattice Green's function (\ref{sec:lgf}), the equation can be reduced to the degree of freedoms on the interface $\Gamma$ by the atomistic-based boundary element method (ABEM) \cite{Li2012, Wu2017}. In the original work, ABEM is implemented for static elasticity. In this paper, we extend the idea to dynamics problems where the force constant  matrices are shifted by $s^2 I$. 

Since $\widetilde D = [\widetilde D_{ij}]$ is positive definite, the corresponding lattice Green's function is well-defined, which follows the relation \( \sum_j\widetilde G_{nj}\widetilde D_{ji}=\delta_{ni}I.\) The notation, $\widetilde \cdot$, represents the variable in the Laplace domain. As a result, the displacement of atom $n$ ($n\in \Omega_\ii$) can be trivially expressed as
\begin{equation}
  \label{eq:34}
  \bm U_n = \sum_{i\in\Omega_{\ii}}\delta_{ni}\bm U_i =  \sum_{i\in\Omega_{\ii},j}\widetilde G_{nj}\widetilde D_{ji}\bm U_i.
\end{equation}
The key step of the dimensional reduction is applying Abel's lemma (summation by parts) to Eq. \eqref{eq:34}. In this case, the Abel's lemma is expressed as
\begin{equation}
  \label{eq:35}
  \sum_{i\in\Omega_\ii,j}\widetilde G_{nj}\widetilde D_{ji}\bm U_i = \sum_{i\in \Gamma_\i,j\in \Gamma_\ii}\widetilde G_{ni}D_{ji}\bm U_j-\sum_{i\in \Gamma_\i,j\in \Gamma_\ii}\widetilde G_{nj}D_{ji}\bm U_i + \sum_{j\in \Omega_\ii}\widetilde G_{nj}\bm B_j,
\end{equation}
where $\bm B_j = \sum_{i\in \Omega} D_{ji}\bm U_i$. In the above equation, the first two summations are over the interface $\Gamma_\i$ and $\Gamma_\ii$ because of the locality of $\widetilde D_{ij}$. When no external force is present, $\bm B_j=0$. 
If we choose $n\in \Gamma_\ii$, Eq. \eqref{eq:35} forms a linear system,
\begin{equation}
  \label{eq:36}
  \bm U_{\iiciii}=\widetilde K\bm U_{\iiciii} + \widetilde L\bm F_{\iiciii},
\end{equation}
where
\begin{equation}
  \label{eq:46}
  \left\{
    \begin{aligned}
      \widetilde K_{nj} &= \sum_{i\in \Gamma_\i}\widetilde G_{ni}D_{ji}\\
      \widetilde L_{nj} &= \widetilde G_{nj}
    \end{aligned}
  \right.
\end{equation}
The linear system is still closed. This linear system provides an alternative expression of the DtN map, given by 
\begin{equation}
  \label{eq:15}
  \bm U_{\iiciii}(s) = (I-\widetilde K(s))^{-1}\widetilde L(s)\bm F_{\iiciii}(s).
\end{equation}
Notice that the matrices in the linear system have dimensions much smaller than $n_\ii.$ Eq. \eqref{eq:15}, which provides the Schur complement of the corresponding block of the matrix $(s^2I + D_\iicii)$, is equivalent to Eq. \eqref{eq:44}. In Eq. \eqref{eq:15}, we do not need to evaluate the inverse of a large matrix. 
\section{Stability of Absorbing Boundary Conditions (ABCs)}
\label{sec:stability}

As a Hamiltonian system, the stability of the MD model after modifications due to the approximate BCs is a very delicate issue \cite{Perko2013}. 
In this section, we will provide the principles of the approximations in section \ref{sec:approx} to ensure stability. 
Since the coefficients of the rational approximation are determined by interpolation, our principles will focus on the selections of interpolation points.

{\bf Zeroth order approximation.} The zeroth order approximation is automatically stable when we choose constant matrix $\Theta(0)$ as $R_{0,0}(s)$. In fact, $(D_{\i,\i}-E_\i^TD_{\iiciii}^TD^{-1}_{\iicii}D_{\iiciii})$ is the Schur complement of $D_{\iicii}$ of symmetric positive definite matrix $D$. Therefore, we have the following stability condition of the dynamics \eqref{eq:1st}.
\begin{theorem}[Zeroth order approximation]
The zeroth-order approximate dynamics (Eq. \eqref{eq:0th}) is stable, provided that the interpolation point is $(0, \Theta(0))$.
\end{theorem}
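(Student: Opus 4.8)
The plan is to collapse the zeroth-order dynamics onto the single block $\bm u_\i$ and reduce the whole statement to the positive definiteness of one matrix. First I would insert the trace relation $\bm u_{\iciii} = E_\i\bm u_\i$ into Eq. \eqref{eq:0th}; because the interpolation point is $(0,\Theta(0))$, the feedback matrix carries exactly $D_{\iicii}^{-1} = (0\cdot I + D_{\iicii})^{-1}$ in its middle, and the equation closes into
\begin{equation*}
  \ddot{\bm u}_\i = -\bigl(D_{\ici} - E_\i^T D_{\iiciii}^T D_{\iicii}^{-1} D_{\iiciii} E_\i\bigr)\bm u_\i = -\overbar D_{\ici}\,\bm u_\i.
\end{equation*}
Thus the claim is equivalent to showing that $\overbar D_{\ici}$ is symmetric positive definite.

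Next I would identify $\overbar D_{\ici}$ with the Schur complement $D_{\ici} - D_{\icii} D_{\iicii}^{-1} D_{\iici}$ of the block $D_{\iicii}$ in the full Hessian $D$. The sparsity structure Eq. \eqref{eq:12} says precisely that the restriction $E_\i$ zero-pads the compressed coupling back onto the full index set of $\Omega_\i$, so $D_{\iiciii} E_\i = D_{\iici}$; transposing and using $D_{\icii} = D_{\iici}^T$ turns the feedback term into $D_{\icii} D_{\iicii}^{-1} D_{\iici}$, giving $\overbar D_{\ici} = D/D_{\iicii}$ exactly. This index bookkeeping with the restriction operators is the only fiddly part of the argument.

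The crux is then the standard fact that a Schur complement of a symmetric positive definite matrix with respect to a principal block is again symmetric positive definite. Since $D$ is SPD and $D_{\iicii}$ is a principal submatrix (hence SPD and invertible), the block factorization
\begin{equation*}
  D = \begin{bmatrix} I & D_{\icii} D_{\iicii}^{-1} \\ \mathbf 0 & I \end{bmatrix}
      \begin{bmatrix} \overbar D_{\ici} & \mathbf 0 \\ \mathbf 0 & D_{\iicii} \end{bmatrix}
      \begin{bmatrix} I & \mathbf 0 \\ D_{\iicii}^{-1} D_{\iici} & I \end{bmatrix}
\end{equation*}
exhibits $D$ as congruent to $\mathrm{diag}(\overbar D_{\ici}, D_{\iicii})$ through nonsingular triangular factors. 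By Sylvester's law of inertia the block-diagonal middle factor is SPD, so both diagonal blocks are SPD; in particular $\overbar D_{\ici} \succ 0$. Equivalently, $\bm x^T \overbar D_{\ici}\bm x = \min_{\bm y} \begin{bmatrix}\bm x \\ \bm y\end{bmatrix}^T D \begin{bmatrix}\bm x \\ \bm y\end{bmatrix} > 0$ for every $\bm x \ne 0$.

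Finally, stability follows at once. The reduced system $\ddot{\bm u}_\i = -\overbar D_{\ici}\bm u_\i$ conserves the energy $\tfrac12\dot{\bm u}_\i^T\dot{\bm u}_\i + \tfrac12\bm u_\i^T\overbar D_{\ici}\bm u_\i$, a positive definite quadratic form in $(\bm u_\i,\dot{\bm u}_\i)$ precisely because $\overbar D_{\ici}\succ 0$; this Lyapunov function bounds every trajectory. Equivalently, the normal-mode frequencies $\sqrt{\lambda_k(\overbar D_{\ici})}$ are real and positive, so every solution is a bounded superposition of oscillations, which is the desired stability. I expect no genuine obstacle here: once the feedback term is recognized as $D/D_{\iicii}$, positive definiteness and hence stability are immediate.
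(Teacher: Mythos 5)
Your proposal is correct and takes essentially the same route as the paper: the paper's (very terse) argument is exactly that $\overbar D_{\ici}=D_{\ici}-E_\i^TD_{\iiciii}^TD_{\iicii}^{-1}D_{\iiciii}E_\i$ is the Schur complement of $D_{\iicii}$ in the symmetric positive definite Hessian $D$, hence symmetric positive definite, so the reduced dynamics $\ddot{\bm u}_\i=-\overbar D_{\ici}\bm u_\i$ is stable. Your identity $D_{\iiciii}E_\i=D_{\iici}$, the block triangular factorization with Sylvester's law of inertia, and the explicit conserved energy $\tfrac12\dot{\bm u}_\i^T\dot{\bm u}_\i+\tfrac12\bm u_\i^T\overbar D_{\ici}\bm u_\i$ simply supply details the paper leaves implicit.
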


{\bf First order approximation.} To establish the stability of Eq. \eqref{eq:1st}, we introduce the following Lyapunov functional:
\begin{equation}
  \label{eq:6}
  E(\bm u_\i, \bm v_\i, \bm g) = \frac{1}{2}\bm v_\i^T\bm v_\i+\frac{1}{2}\bm u_\i^T\overbar D_{\ici}\bm u_\i - \frac{1}{2}\bm g^TA_0^{-1}B_0\bm g.
\end{equation}
Since $-A_0^{-1}B_0$ (Assumption \ref{assump:1}) is symmetric positive definite, $E\geq0$ for any $\bm v_\i$, $\bm u_\i$, and $\bm g$. The derivative of $V$, 
\begin{equation}
  \label{eq:18}
  \begin{aligned}
    L_tE &= \bm v_\i^T\dot{\bm v}_\i+\bm u_\i^T\overbar{D}_{\ici}\dot{\bm u}_\i-\bm g^TA_0^{-1}B_0\dot{\bm g}\\
    & = \bm v_\i^T(-\overbar{D}_\ii\bm u_\i-D_{\iciii}\bm g) + \bm u_i^T\overbar{D}_\ici\dot{\bm v}_\i-\bm g^TA_0^{-1}B_0(-B_0^{-1}A_0D_{\iiciii}\bm v_\i + B_0\bm g)\\ 
    &= -\bm g^TA_0^{-1}B_0^2\bm g.
  \end{aligned}
\end{equation}
Hence, when $A_0^{-1}B_0^2$ is a positive semi-definite matrix, and $B_0^{-1}A_0$ (Assumption \ref{assump:1}) is symmetric negative definite. According to standard ODE theory \cite{Perko2013}, the first order approximate dynamics \eqref{eq:1st} is stable. 

\begin{theorem}[First order approximation]
  \label{thm:1st}
  The approximate dynamics \eqref{eq:1st} is stable if the coefficients of the rational function (Eq. \eqref{eq:R11}) are determined by $R_{1,1}(0)=\Theta_0$ (Assumption \ref{assump:1}) and $R_{1,1}(s_1)=\Theta_1$ with any $s_1> 0$.
\end{theorem}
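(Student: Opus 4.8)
The plan is to treat the Lyapunov functional $E$ in Eq.~\eqref{eq:6} and its derivative \eqref{eq:18} as given, and to reduce the theorem to two matrix sign conditions on the interpolation coefficients $A_0,B_0$; once these are verified, stability follows from standard Lyapunov theory \cite{Perko2013}. First I would solve the interpolation explicitly. Imposing $R_{1,1}(0)=\Theta_0$ (Assumption~\ref{assump:1}) on $R_{1,1}(s)=(sI-B_0)^{-1}A_0$ gives $A_0=-B_0\Theta_0$, and then imposing $R_{1,1}(s_1)=\Theta_1$ forces $B_0(\Theta_1-\Theta_0)=s_1\Theta_1$, so that $B_0=s_1\Theta_1(\Theta_1-\Theta_0)^{-1}$ and $A_0=-s_1\Theta_1(\Theta_1-\Theta_0)^{-1}\Theta_0$. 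This expresses both coefficients purely through $\Theta_0$ and $\Theta_1$, the only matrices whose spectral properties I can control, and uses that $\Theta_1-\Theta_0$ is invertible.

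Next I would verify the two conditions appearing in \eqref{eq:18}. The condition $E\ge 0$ reduces to $-A_0^{-1}B_0$ being symmetric positive definite; since $A_0^{-1}B_0=(B_0^{-1}A_0)^{-1}=-\Theta_0^{-1}$ by Assumption~\ref{assump:1}, this holds for every $s_1$, and together with $\overbar D_{\ici}\succ 0$ (the Schur complement of $D_{\iicii}$ in the SPD matrix $D$) it shows $E$ is a genuine positive-definite Lyapunov function. The substantive condition is the dissipation inequality $L_tE=-\bm g^TA_0^{-1}B_0^2\bm g\le 0$, i.e. positive semidefiniteness of $A_0^{-1}B_0^2$. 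Using $A_0=-B_0\Theta_0$ I would simplify $A_0^{-1}B_0^2=-\Theta_0^{-1}B_0$, and then use the identity $\Theta_0^{-1}\Theta_1(\Theta_1-\Theta_0)^{-1}=(\Theta_1-\Theta_0)^{-1}+\Theta_0^{-1}$ (obtained by writing $\Theta_1=\Theta_0+(\Theta_1-\Theta_0)$) to collapse this into the symmetric form $A_0^{-1}B_0^2=-s_1\bigl[(\Theta_1-\Theta_0)^{-1}+\Theta_0^{-1}\bigr]$.

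The heart of the argument, and the step I expect to be the main obstacle, is proving $(\Theta_1-\Theta_0)^{-1}+\Theta_0^{-1}\preceq 0$. I would first record the monotonicity of the target kernel: since $s_1^2I+D_{\iicii}\succ D_{\iicii}$ its inverse is smaller, and because $\Theta(s)=E_\ii(s^2I+D_{\iicii})^{-1}E_\ii^T$ is a principal submatrix this passes to $\Theta_1\prec\Theta_0$, whence $0\prec\Theta_0-\Theta_1\prec\Theta_0$ (the upper bound using $\Theta_1\succ 0$). Applying the operator-monotone-decreasing property of matrix inversion to these bounds gives $(\Theta_0-\Theta_1)^{-1}\succ\Theta_0^{-1}$, which rearranges to exactly $(\Theta_1-\Theta_0)^{-1}+\Theta_0^{-1}\prec 0$. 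Hence $A_0^{-1}B_0^2\succ 0$ for every $s_1>0$, so $L_tE\le 0$, and with the positive definiteness of $E$ the Lyapunov criterion yields stability of \eqref{eq:1st} for arbitrary $s_1>0$. The only delicate bookkeeping is ensuring the principal-submatrix reduction preserves strict definiteness, so that the inversion-monotonicity step is legitimate; this parallels the Schur-complement manipulations already used in Lemmas~\ref{lemma:1} and~\ref{lemma:2}.
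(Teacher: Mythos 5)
Your proposal is correct, and its skeleton matches the paper's: both start from the Lyapunov functional \eqref{eq:6}, solve the two interpolation conditions by eliminating $A_0$, and reduce stability to the sign conditions on $-A_0^{-1}B_0$ and $A_0^{-1}B_0^2$. Where you genuinely diverge is at the crucial definiteness step. The paper computes $\Theta_1-\Theta_0=-s_1^2E_\ii(s_1^2I+D_{\iicii})^{-1}D_{\iicii}^{-1}E_\ii^T\prec 0$ (its Eq.~\eqref{eq:22}), declares $B_0$ ``negative definite,'' and from that simply asserts that $A_0^{-1}B_0^2$ is positive definite; since $B_0$ is a product of two non-commuting symmetric matrices it is not itself symmetric, so the paper's claim can only mean negative spectrum, and the jump to positive definiteness of the (a priori non-symmetric) matrix $A_0^{-1}B_0^2$ is left implicit. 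Your route closes exactly this gap: the identity $\Theta_0^{-1}\Theta_1(\Theta_1-\Theta_0)^{-1}=(\Theta_1-\Theta_0)^{-1}+\Theta_0^{-1}$ exhibits $A_0^{-1}B_0^2=-s_1\bigl[(\Theta_1-\Theta_0)^{-1}+\Theta_0^{-1}\bigr]$ as manifestly \emph{symmetric}, and its positivity follows from $0\prec\Theta_0-\Theta_1\prec\Theta_0$ together with the antitonicity of inversion on symmetric positive definite matrices --- an argument in the same spirit as, but more elementary than, the congruence manipulations of Lemmas~\ref{lemma:1} and~\ref{lemma:2}; your derivation of $\Theta_0-\Theta_1\succ 0$ via the resolvent comparison and principal-submatrix reduction is equivalent to the paper's Eq.~\eqref{eq:22}. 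Two side remarks in your favor: your elimination $B_0(\Theta_1-\Theta_0)=s_1\Theta_1$, i.e., $B_0=s_1\Theta_1(\Theta_1-\Theta_0)^{-1}$, has the factors in the correct order (the paper's Eq.~\eqref{eq:21} reverses them, inconsistently with its own second-order formula \eqref{eq:24}, which agrees with yours), and your observation $-A_0^{-1}B_0=\Theta_0^{-1}\succ 0$ makes the nonnegativity of $E$ immediate rather than merely asserted. In short, your approach buys a fully rigorous dissipation estimate $L_tE\le 0$ for every $s_1>0$ at the cost of one extra algebraic identity; the paper's version buys brevity but leaves the symmetrization unaddressed.
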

\begin{proof}
  The coefficients $A_0$ and $B_0$ are determined by solving the equations,
  \begin{equation}
    \label{eq:42}
    \left\{
      \begin{aligned}
        -B_0^{-1}A_0& = \Theta_0,\\
        (s_1I-B_0)^{-1}A_0 &= \Theta_1.
      \end{aligned}
    \right.
  \end{equation}
  We eliminate the coefficient $A_0$ and have
  \begin{equation}
    \label{eq:21}
    \left(\Theta_1-\Theta_0\right)B_0 = s_1\Theta_1.
  \end{equation}
  The matrix
  \begin{equation}
    \label{eq:22}
    \begin{aligned}
      \Theta_1-\Theta_0 &=E_\ii(s_1^2I+D_{\iicii})^{-1}E_\ii^T-E_\ii D_{\iicii}^{-1}E_\ii^T \\
      &= -s_1^2E_\ii(s^2_1+D_{\iicii})^{-1}D_{\iicii}^{-1}E_\ii^T,
    \end{aligned}
  \end{equation}
  is negative definite for any $s_1$. As a result, $B_0 = s_1(\Theta_1-\Theta_0)^{-1}\Theta_1$ is a negative definite matrix, and $A_0^{-1}B_0^2$ is a positive definite matrix. Therefore, the Lyapunov function defined by Eq. \eqref{eq:6} is nonnegative, and its derivative $L_tE$ is semi negative definite. 
\end{proof}

{\bf Second order approximation.} The stability of Eq. \eqref{eq:2nd} will be analyzed in a similar approach as described in the first order case. The Lyapunov functional for the system (Eq. \eqref{eq:2nd}) is defined by
\begin{equation}
  \label{eq:19}
  E(\bm u_\i, \bm v_\i, \bm g) = \frac{1}{2}\bm v_\i^T\bm v_\i + \frac{1}{2}\bm u_\i^T\overbar{D}_{\ici}\bm u_\i+\frac{1}{2}\bm g^TQ\bm g, 
\end{equation}
where
\begin{equation*}
  Q=
  \begin{bmatrix}
    -A_1^{-1}B_1 & \mathbf 0\\
    \mathbf 0 & A_1^{-1}
  \end{bmatrix}
  .
\end{equation*}
If $-A_1^{-1}B_1$ and $A_1^{-1}$ are symmetric positive definite matrices, then $Q$ is positive definite and the Lyapunov function is positive definite. The derivative of $E$ is
\begin{equation}
  \label{eq:20}
  \begin{aligned}
    L_tE &= \bm v_\i^T\dot{\bm v}_\i+\bm u_\i^T\overbar{D}_{\ici}\dot{\bm u}_\i+\bm g^TQ\dot{\bm g}\\
    & = -\bm v_\i^{T}E_1D_{\iciii}\bm g - \bm g^TQB^{-1}AD_{\iiciii}\bm v_\i + \bm g^TQB\bm g\\
    & = \bm g^TQB\bm g.
  \end{aligned}
\end{equation}
Here, $QB$ is explicitly expressed as
\begin{equation}
  QB = 
  \begin{bmatrix}
    -A_1^{-1}B_1B_0 & -A_1^{-1}B_1\\
    A_1^{-1}B_1 & \mathbf 0
  \end{bmatrix}.
\end{equation}
If $QB$ is negative semi-definite, then $L_tE < 0$. $QB$ is negative definite if and only if $-A_1^{-1}B_1B_0$ is negative definite. The stability conditions of dynamics \eqref{eq:2nd} are that $A_1^{-1}$ is symmetric positive definite, and $A_1^{-1}B_1B_0$ positive semi-definite.

Actually, these stability conditions are satisfied when we choose the interpolation points properly. 

\begin{theorem}[Second order approximation]
  The approximate dynamics \eqref{eq:2nd} is stable if the coefficients of the rational function Eq. \eqref{eq:11} are determined by $R_{2,2}(0)=\Theta(0)$, $R_{2,2}'(0)=\Theta'(0)$, $R_{2,2}(s_0)=\Theta(s_0)$, and $R_{2,2}(s_1)=\Theta(s_1)$ with $0< s_1 \neq s_2 < \sqrt{\lambda_{\text{min}}(D_{\iicii})}$. 
\end{theorem}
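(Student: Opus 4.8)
The plan is to reduce the claim to the two definiteness conditions already isolated before the theorem — that $Q$ is positive definite and that $QB$ is negative semi-definite — and to verify each by first solving the interpolation system for $A_0,A_1,B_0,B_1$ and then invoking Lemmas \ref{lemma:1} and \ref{lemma:2}. First I would solve the four interpolation conditions. Assumptions \ref{assump:1} and \ref{assump:2} give $A_1=-B_1\Theta_0$ and $A_0=-B_0\Theta_0$; substituting these into $R_{2,2}(s_i)=\Theta_i$ and clearing the denominator produces, for $i=1,2$, the compact relation $(s_iB_0+B_1)(\Theta_i-\Theta_0)=s_i^2\Theta_i$, that is, $s_iB_0+B_1=s_i^2I+s_i^2\Theta_0(\Theta_i-\Theta_0)^{-1}$. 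Eliminating between $i=1$ and $i=2$ gives $B_0=(s_1+s_2)I+\Theta_0\Lambda_1$, where $\Lambda_1$ is the matrix shown to be negative definite in Lemma \ref{lemma:1}, together with the two identities that drive the rest: $-A_1^{-1}B_1=\Theta_0^{-1}$ (immediate from Assumption \ref{assump:1}) and $A_1=-B_1\Theta_0=s_1s_2(\Theta_0+\Theta_0\Lambda_2\Theta_0)$, where $\Lambda_2$ is the matrix shown to be positive definite in Lemma \ref{lemma:2}.

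With these in hand the positivity of $Q$ is almost automatic. Its $(1,1)$ block is $-A_1^{-1}B_1=\Theta_0^{-1}$, which is symmetric positive definite since $\Theta_0=E_\ii D_{\iicii}^{-1}E_\ii^T\succ0$. Its $(2,2)$ block is $A_1^{-1}$, and from $A_1=s_1s_2(\Theta_0+\Theta_0\Lambda_2\Theta_0)$ — the sum of the positive definite $\Theta_0$ and the congruence $\Theta_0\Lambda_2\Theta_0$ of the positive definite $\Lambda_2$ — we obtain $A_1\succ0$, hence $A_1^{-1}\succ0$. This is precisely where the hypothesis $s_1,s_2<\sqrt{\lambda_{\text{min}}(D_{\iicii})}$ enters, since it is the condition under which Lemma \ref{lemma:2} yields $\Lambda_2\succ0$. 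Combined with $\overbar D_{\ici}\succ0$ (the Schur-complement positivity already used in the zeroth-order case), this makes the Lyapunov functional \eqref{eq:19} positive definite.

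The delicate step, which I expect to be the main obstacle, is $L_tE\le0$, i.e. that $A_1^{-1}B_1B_0$ is positive semi-definite. Using $A_1^{-1}B_1=-\Theta_0^{-1}$, this is equivalent to $\Theta_0^{-1}B_0\preceq0$, and inserting $B_0=(s_1+s_2)I+\Theta_0\Lambda_1$ turns it into $(s_1+s_2)\Theta_0^{-1}+\Lambda_1\preceq0$. Lemma \ref{lemma:1} alone only supplies $\Lambda_1\prec0$, which does not suffice: the competing term $(s_1+s_2)\Theta_0^{-1}$ is positive definite, so one must show that the negative divided difference $\Lambda_1$ in fact dominates it.

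To close this gap I would recognize the combination as a single divided difference. By \eqref{eq:22} in the form $s_i^2(\Theta_i-\Theta_0)^{-1}=-W(s_i)^{-1}$ with $W(s)=E_\ii(s^2I+D_{\iicii})^{-1}D_{\iicii}^{-1}E_\ii^T$, a short computation gives $(s_1+s_2)\Theta_0^{-1}+\Lambda_1=\frac{g(s_1)-g(s_2)}{s_1-s_2}$ for $g(s)=s^2\Theta_0^{-1}-W(s)^{-1}$, so it is enough to prove $g'(s)\preceq0$ for $s>0$. Differentiating, and writing $P=D_{\iicii}^{-1/2}E_\ii^T$ and $R=(s^2I+D_{\iicii})^{-1}$, this reduces to the single inequality $(P^TP)^{-1}\preceq(P^TRP)^{-1}(P^TR^2P)(P^TRP)^{-1}$, which is exactly the statement that the orthogonal projection $\Pi=P(P^TP)^{-1}P^T$ is norm non-increasing: evaluated on any vector the two sides equal $\|\Pi Rw\|^2$ and $\|Rw\|^2$. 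This argument holds for every $s>0$ and never uses the bound on $s_1,s_2$, consistent with that bound being required only for $A_1^{-1}\succ0$. Finally, with $Q\succ0$ and $L_tE\le0$ in place, standard Lyapunov/ODE theory \cite{Perko2013} yields the stability of \eqref{eq:2nd}.
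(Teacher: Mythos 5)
Your proposal is correct, and it follows the paper's overall architecture: the same Lyapunov functional \eqref{eq:19}, the same reduction of stability to ``$Q$ positive definite and $QB$ negative semi-definite,'' the same solution of the interpolation system (your compact forms $B_0=(s_1+s_2)I+\Theta_0\Lambda_1$ and $A_1=s_1s_2\left(\Theta_0+\Theta_0\Lambda_2\Theta_0\right)$ are exactly the paper's \eqref{eq:24} and \eqref{eq:26} after using $\Theta_i(\Theta_i-\Theta_0)^{-1}=I+\Theta_0(\Theta_i-\Theta_0)^{-1}$), and the same use of Lemma \ref{lemma:2} to get $A_1\succ 0$. Where you diverge is exactly the step you flagged as delicate, and your instinct is vindicated: the paper disposes of the damping condition in one line, ``By Lemma \ref{lemma:1}, $B_0$ is positive definite. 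Since $\Theta_0$ is symmetric positive definite, $A_1^{-1}B_1B_0$ is positive definite.'' That line does not hold up. First, the sign is inconsistent: since $A_1^{-1}B_1=-\Theta_0^{-1}$, positive semi-definiteness of $A_1^{-1}B_1B_0=-\Theta_0^{-1}B_0$ requires the eigenvalues of $B_0$ to be nonpositive, not positive. Second, as you note, Lemma \ref{lemma:1} only supplies $\Lambda_1\prec 0$, which cannot by itself dominate the positive term $(s_1+s_2)\Theta_0^{-1}$ in $\Theta_0^{-1}B_0=(s_1+s_2)\Theta_0^{-1}+\Lambda_1$. Your repair is sound: I verified the identity $\Theta_0^{-1}B_0=\bigl(g(s_1)-g(s_2)\bigr)/(s_1-s_2)$ with $g(s)=s^2\Theta_0^{-1}-W(s)^{-1}$, the derivative formula $g'(s)=2s\bigl[(P^TP)^{-1}-(P^TRP)^{-1}(P^TR^2P)(P^TRP)^{-1}\bigr]$, and the projection argument (with $v=P^TRPu$, $w=Pu$, the two quadratic forms are $\|\Pi Rw\|^2$ and $\|Rw\|^2$; $P=D_{\iicii}^{-1/2}E_\ii^T$ has full column rank, so $P^TRP$ is invertible and $\Pi$ is a genuine orthogonal projector). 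Your conclusion is moreover sharp: when $E_\ii$ is square, e.g.\ the scalar case $\Theta(s)=1/(s^2+d)$, one gets $\Pi=I$, $g'\equiv 0$ and $B_0=0$, so only semi-definiteness of $\Theta_0^{-1}B_0$ can ever hold, confirming that the paper's strict claim is wrong even after its sign is corrected, while $L_tE\le 0$ — all that Lyapunov stability requires — survives.

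Two further points in your write-up are worth keeping. Your observation that the spectral bound $s_1,s_2<\sqrt{\lambda_{\text{min}}(D_{\iicii})}$ enters only through Lemma \ref{lemma:2}, i.e.\ only to make $A_1\succ 0$ and hence $Q\succ 0$, while the dissipativity $\Theta_0^{-1}B_0\preceq 0$ holds for \emph{all} $0<s_1\ne s_2$, is a genuine clarification of the theorem's hypotheses that the paper's proof obscures. One cosmetic remark: the matrix you call $\Lambda_2$ should be $\bigl(s_1(\Theta_1-\Theta_0)^{-1}-s_2(\Theta_2-\Theta_0)^{-1}\bigr)/(s_1-s_2)$; the statement of Lemma \ref{lemma:2} in the paper has the indices $1,2$ swapped relative to its own proof (compare \eqref{eq:28}), which is evidently a typo, and your usage matches the version the proof actually establishes and that \eqref{eq:26} requires.
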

\begin{proof}
 In fact, the coefficients $A_0$, $A_1$, $B_0$, and $B_1$ are determined by solving a linear system,
 \begin{subequations}
   \begin{align}
     -B_1^{-1}A_1&=\Theta_0, \label{eq:2nd_a}\\
     B_1^{-1}B_0B_1^{-1}A_1-B_1^{-1}A_0&=0, \label{eq:2nd_b}\\
     s_1B_0\Theta_1+B_1\Theta_1+s_1A_0+A_1&=s_1^2\Theta_1, \label{eq:2nd_c}\\
     s_2B_0\Theta_2+B_1\Theta_2+s_2A_0+A_1&=s_2^2\Theta_2, \label{eq:2nd_d}                                  
   \end{align}
 \end{subequations}
 for the coefficients $A_0$, $A_1$, $B_0$, and $B_1$. Eq. \eqref{eq:2nd_b} (Assumption \ref{assump:2}) is reduced to $A_0=B_0B_1^{-1}A_1$. By solving the linear system for $B_0$ and $B_1$, we have
\begin{equation}
  \label{eq:24}
  B_0 = \frac{1}{s_1-s_2}\left(s_1^2\Theta_1(\Theta_1-\Theta_0)^{-1}-s_2^2\Theta_2(\Theta_2-\Theta_0)^{-1}\right),
\end{equation}
and
\begin{equation}
  \label{eq:23}
  \begin{aligned}
    B_1 &= \frac{s_1s_2}{s_2-s_1}\left(s_1\Theta_1(\Theta_1-\Theta_0)^{-1}-s_2\Theta_2(\Theta_2-\Theta_0)^{-1}\right).\\
  \end{aligned}
\end{equation}
By Lemma \ref{lemma:1}, $B_0$ is positive definite. Since $\Theta_0$ is symmetric positive definite, $A_1^{-1}B_1B_0$ is positive definite. By Lemma \ref{lemma:2}, $B_1$ is negative definite. Since $A_1=-B_1\Theta_0$, we have that the coefficient,
\begin{equation}
  \label{eq:26}
  A_1= s_1s_2\Theta_0 - \frac{s_1s_2}{s_2-s_1}\Theta_0\left[s_1(\Theta_1-\Theta_0)^{-1}-s_2(\Theta_2-\Theta_0)^{-1})\right]\Theta_0
\end{equation}
is symmetric positive definite. Therefore, the dynamics \eqref{eq:2nd} is stable.
\end{proof}
\section{Implementation of ABCs and partial-harmonic approximation of MD}
\label{sec:harmonic}

For practical applications,  the ABCs should be formulated so that the {\it nonlinear} interactions in $\Omega_\i$ are retained, to properly model defect structure, formation, and migration. 
This amounts to a {\it partial} harmonic approximation of the potential energy $V(\bm u_\i, \bm u_{\ii})$. In this approximation, only the interactions involving the atoms in $\Omega_\ii$ are linearized. From the exact potential, these linear interactions should have coefficients given by
\begin{equation}
  \label{eq:30}
  D_{\icii} = \frac{\partial^2V}{\partial \bm u_\i\partial \bm u_{\ii}}(\bm 0, \bm 0), \text{ and } D_{\iicii} = \frac{\partial^2V}{\partial \bm u_{\ii}^2}(\bm 0, \bm 0). 
\end{equation}
This is consistent with the notations in section \ref{sec:dtn}. In particular, $D_{\icii}$ represents in the coupling between $\Omega_\i$ and $\Omega_\ii.$

\medskip

The key observation is that when the region $\Omega_{\ii}$ is at mechanical equilibrium, we have $\bm u_{\ii}=C\bm u_{\i}$ with $C = -D_{\iicii}^{-1}D_{\iici}$, within the linear approximation. Notice that the equilibrium is relative to the displacement of the atoms in $\Omega_\i.$  With this mechanical equilibrium as the references, we introduce the partial harmonic approximation by defining the following approximate potential energy:
\begin{equation}
  \label{eq:31}
  \widetilde V(\bm u_\i, \bm u_{\ii}) = V(\bm u_\i, C\bm u_{\i}) + \frac{1}{2}(\bm u_{\ii}-C\bm u_{\i})^TD_{\iicii}(\bm u_{\ii}-C\bm u_\i).
\end{equation}
The first part is the potential energy in $\Omega_\i$ assuming a mechanical equilibrium in the surrounding area, while the second part is for region $\Omega_{\ii}$. 

We now show that this potential energy $\widetilde V$  is consistent with the exact model $V$ in the following sense.
\begin{enumerate}
\item {\bf $\widetilde V$ is a second order approximation.} More specifically, one can easily verify that,
 \begin{equation}
    \label{eq:49}
    \frac{\partial \widetilde V}{\partial u_i}(\bm 0, \bm 0) = 0, \text{ for any } i \in \Omega,
\end{equation}
and
\begin{equation}
  \label{eq:45}
  \frac{\partial^2\widetilde V}{\partial \bm u_i\partial \bm u_j}(\bm 0, \bm 0) = D_{ij} = \frac{\partial^2V}{\partial \bm u_i\partial \bm u_j}(\bm 0, \bm 0), \text{ for any } i, j \in \Omega.
\end{equation}
This of course implies that  $\widetilde V = V$ if $V$ is quadratic.

\item {\bf The interactions in the interior of $\Omega_\i$ is exactly preserved.} To see this property, we notice that most potential energy can be decomposed into site energies,
\begin{equation}
  \label{eq:47}
  V = \sum_{i\in \Omega}V_i,
\end{equation}
and $V_i$ only depends on the atoms that are within the cut-off radius around the $i$-th atom. One can easily verify that,
$V_i=\widetilde V_i$ for all $i\in \Omega_\i\setminus \Gamma_\i$.
\end{enumerate}

\medskip

Applying the Hamilton's principle, we obtain the corresponding approximate dynamics,
\begin{equation}
  \label{eq:32}
  \left\{
    \begin{aligned}
      \ddot{\bm u}_\i&=-\nabla_{\bm u_\i}V(\bm u_\i, C\bm u_\i) - C^T\nabla_{\bm u_{\ii}}V(\bm u_\i, C\bm u_\i) + C^TD_\iicii(\bm u_{\ii} - C\bm u_\i),\\
      \ddot{\bm u}_{\ii}&=-D_{\iicii}\bm u_{\ii}-D_{\iici}\bm u_\i.
    \end{aligned}
  \right.
\end{equation}

Now, one can make the observation that the dynamics in $\Omega_\ii$ is identical to that in \eqref{eq:lds}. Further more, the first equation is only coupled to $\bm u_\ii$ linearly. These observations show  that the DtN map can be formulated in the same way as in section \ref{sec:dtn}. 
{As an example, let us write out the molecular dynamics model in $\Omega_\i$, supplemented with the first order ABC,
  \begin{equation}
    \label{eq:48}
    \left\{
      \begin{aligned}
        \ddot{\bm u}_\i&=-\nabla \Phi(\bm u_\i)- D_{\iciii}\bm g,\\
        \dot{\bm g} & =-B_0^{-1}A_0D_{\iiciii}\bm v_{\iciii} + B_0\bm g,
      \end{aligned}
    \right.
  \end{equation}
  where $\Phi(\bm u_\i)\overset{\text{def}}{=}V_\i(\bm u_\i, CE_\i^T\bm u_{\iciii}) $ is the effective potential energy in $\Omega_\i$.

  It is also straightforward to establish the stability of the dynamics \eqref{eq:48}. We introduce this Lyapunov functional,
  \begin{equation}
    \label{eq:50}
    E = \Phi(\bm u_\i)+ \frac{1}{2}\bm v_\i^T\bm v_\i - \frac{1}{2}\bm g^TA_0^{-1}B_0\bm g.
  \end{equation}
  Its derivative is simply given by,
  \begin{equation*}
    \label{eq:51}
      L_tE = - \bm g^TA_0^{-1}B_0\bm g.
  \end{equation*}
  So the stability result still follows. 
}


Since $\Omega_\ii$ involves many atoms, it is impracticable to evaluate the second term on the right hand side of the first equation in \eqref{eq:32}. 
But this term can be neglected. This can be justified by using an expansion,
\begin{equation*}
  \nabla_{\bm u_{\ii}}V(\bm u_\i, C\bm u_\i) \approx D_\iicii C\bm u_\i + D_\iici\bm u_\i = \bm 0.
\end{equation*}


The modeling error depends on the deformation of the system in $\Omega_\ii$. In this paper, we will not provide the detailed proof. In fact,
\begin{equation*}
  \nabla_{\bm u_{\ii}}V(\bm u_\i, C\bm u_\i) = D_\iicii C\bm u_\i + D_\iici\bm u_\i  + H.O.T.
  \end{equation*}
  In light of the definition of the matrix $C$, the leading terms become zero. 

\section{Numerical simulations}
\label{sec:num}
In this section, we present the results from two molecular simulations with ABCs applied. The standard Verlet method is implemented for the time integration. The ABCs are simply discretized by the forward Euler method with the same time step size as the Verlet method. The time step of the simulations is set to be $76.2$ femtoseconds ($10^{-15} s$).  

\noindent{\bf System setup.} We choose the nonlinear region $\Omega_\i=[-50\AA, 50\AA]\times[-50\AA, 50\AA]$ and the surrounding region to be infinite $\Omega_\ii=\mathbb R^2\!\setminus\!\Omega_\i$. The region $\Omega_\i$ is filled with $3,265$ bcc iron atoms. The presence of the four corners does not impose any computational difficulty in our formulation. The periodic boundary condition is imposed in the z-direction to mimic a plane strain condition. Among the iron atoms, the interactions are modeled by the EAM potential \cite{Shastry1995}. As preparations, the force constant matrices are approximated by standard finite difference formulas. Due to the locality of the computed force constant matrices,  only $580$ atoms in $\Omega_\ii$ and $520$ atoms in $\Omega_\i$ are involved in the evaluation of the DtN map $T(s)$.  The calculation of the lattice Green's function is discussed in the Appendix. For the initial configuration, the initial velocity is given by
\begin{equation}
  \label{eq:33}
  v_x = 20\sin(x/3)e^{-0.1r}, \text{ and } v_y=20\sin(y/3)e^{-0.1r},
\end{equation}
where $r=\sqrt{x^2+y^2}$. The displacement of all atoms is zero initially.

\noindent{\bf Interpolation points}. In all simulations, we apply zeroth, first, and second order approximate ABCs and monitor the wave reflections. For the zeroth order approximation, the only interpolation point is chosen as $s_0=0.1312$ PetaHz. For the first and second approximations, we need to solve a linear system for the coefficients of rational function. In practice, Assumption \ref{assump:1} can be extended to $-B_{n-1}^{-1}A_{n-1}=\Theta(s_0)$, where $s_0$ does not have to be zero. The proofs of stability still hold. 

For the first order approximation, we choose $s_0=0.1312$ PetaHz and $s_1=0.2624$ PetaHz. For the second order approximation, $s_0=0.01312$ PetaHz, $s_1=0.1312$ PetaHz, and $s_2=0.3936$ PetaHz.

\noindent{\bf Performance}. In implementing the proposed ABCs, only matrix-vector multiplications are involved in each time step, since the coefficients are determined a priori. Computation times were recorded (Table \ref{tab:2}) when the three approximate ABCs are imposed on the same problem. We see that the higher order ABCs do not add up the computational cost significantly. 
\begin{table}[!htbp]
  \centering
  \begin{tabular}[!htbp]{c|c|c}
    \hline
    zeroth order & first order & second order\\
    \hline
    1979.108 s & 2201.072 s & 2783.020 s\\
    \hline
  \end{tabular}
  \caption{Computational time of the three approximate ABCs imposed on the same problem, same computer, for $200,000$ time steps.}
  \label{tab:2}
\end{table}

\subsection{Waves in homogeneous systems}
\begin{figure}[!htbp]
  \centering
  \includegraphics[scale=0.5]{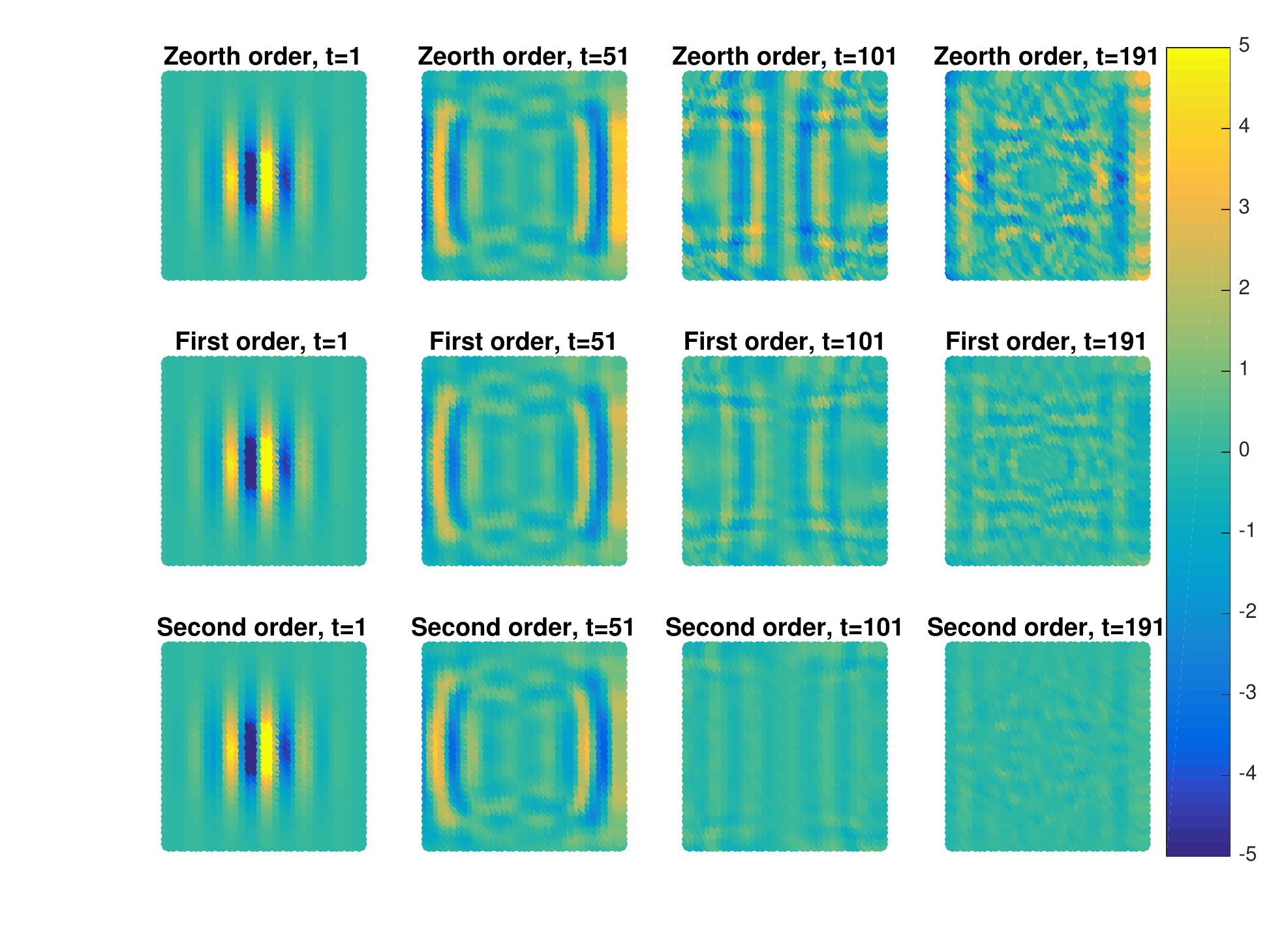}
  \caption{Wave propagations in the homogeneous system. The color indicates the velocity in the x direction.}
  \label{fig:waveperfect}
\end{figure}

We observe within a short time period, the resulting velocity fields, under the three ABCs,  are almost the identical. 
However, when  the lattice waves arrive at  the boundary, a complete reflection is observed under the zeroth order approximation. The reflection is reduced by the first order approximation. The second order approximation exhibited clear improvement: After several reflections, the lattice waves have been almost all absorbed. 

\subsection{Waves in a system with dislocations}
In the second experiment, we implement the ABCs in a system with dislocations. The dislocations are created at $(-20\AA,0)$ and $(20\AA, 0)$ by analytical solutions \cite{Liebowitz1968}. The Burgers vectors of the two dislocations are opposite in the same slip plane. In this case, the far-field displacement has fast decays. As preparations, the system is driven to an equilibrium state by minimizing the total energy. Then the same initial velocity is introduced. An interesting observation in this case is that  high-frequency waves are generated by the interaction of the initial lattice waves and  the dislocations. 
 We observed that the second order approximation still performs the best. The reduction of the reflections is still significant. 
 \begin{figure}[!htbp]
   \centering
   \includegraphics[scale=0.5]{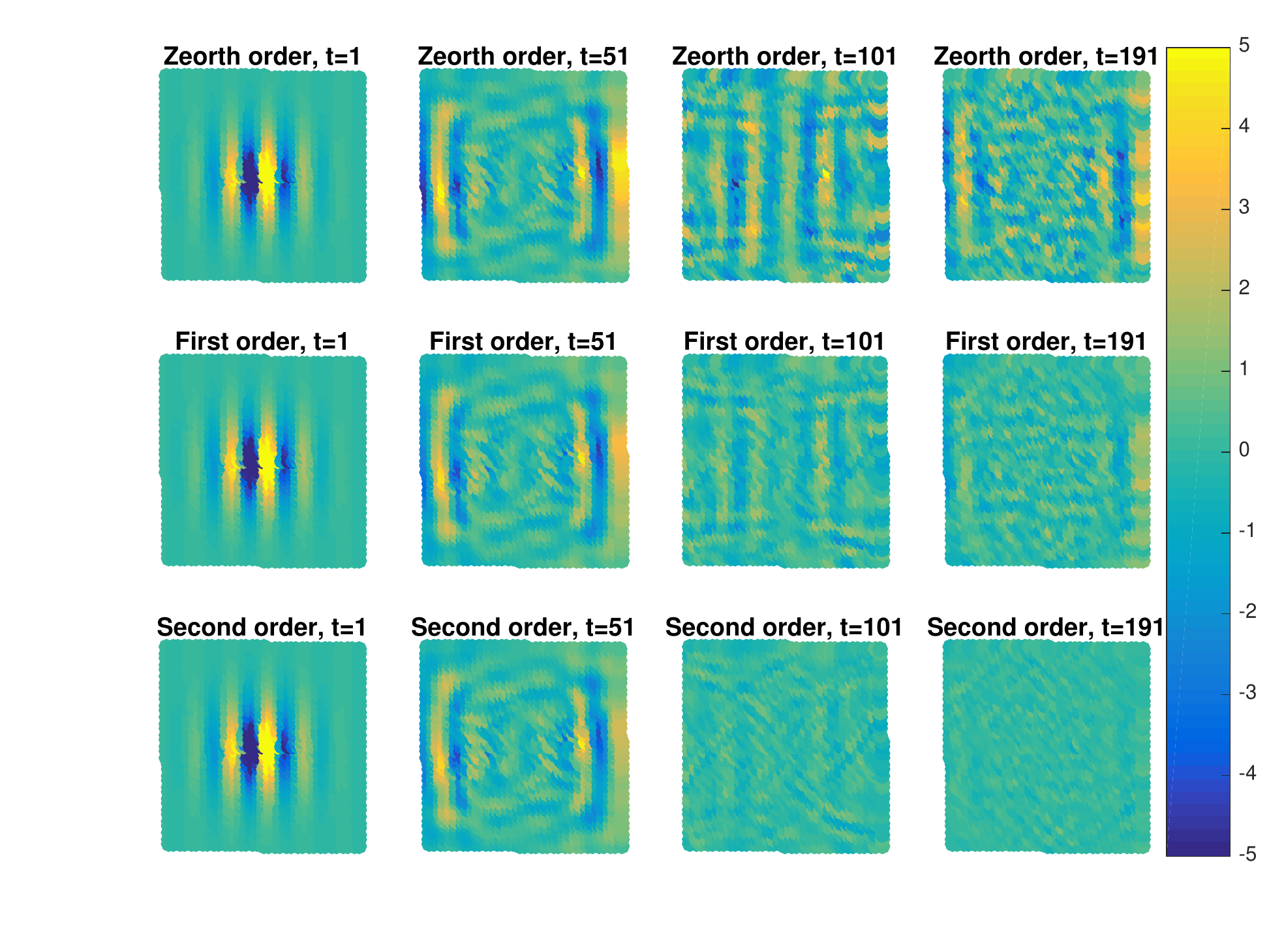}
   \caption{Wave propagations in the system with dislocations, the color indicates the value of velocity.}
   \label{fig:dislocwave}
 \end{figure}
 \section{Conclusion}
 In conclusion, we presented a new strategy to represent and approximate the ABCs systematically. It is a robust, efficient and easy-to-implement method to simulate phonons propagation in a large or infinite domain. Under this framework, further extensions can be pursued in various directions.
 
 {\it Finite temperature}. In this paper, we only focused on the wave propagation at low temperature. Following the original idea of Adelman and Doll's work is based on generalized Langevin equation (GLE), and the great deal of recent development, the proposed ABCs can be extended naturally to the case of finite temperature. The extended approach will be able to exchange heat between truncated atomistic region and the bath.
 
 {\it Dynamical loading}. Another important application of ABCs is to allow external elastic waves to come through the atomistic regions and interact with local defects, which is similar to \cite{Li08b}. Within the current framework, it means that a physical boundary needs to be included into the set $\Gamma$, where a   dynamic loading condition is applied. 
Another approach to  include dynamic loading and simulation the propagation of elastic waves is to couple molecular dynamics models to a continuum elastic wave equation \cite{LiYaE09,Wagner2003}. In such a setting, the transparent boundary condition is still an important component. 
 
 \appendix
 \label{sec:appendix}
 \section{Lattice Green's function}
 \label{sec:lgf}
 The calculations of static lattice Green's functions were discussed in previous works \cite{Li2012, Wu2017}. The integral expression of the lattice Green's function is numerically approximated by a quadrature formula over k-points. However, for the Green's function at long-distance, a direct calculation has to rely on a fine quadrature due to the high oscillations. Fortunately, it has been shown that the far-field lattice Green's function can be approximated by the continuum Green's function. The detailed proofs were provided in \cite{Martinsson2002, Ghazisaeidi2009}. In this paper, we focus on the calculation of the dynamics Green's function. An alternate form of the elastodynamic Green's function in Fourier space is given by C. -Y. Wang \cite{Wang1994}.
 
 To explain the connection, let us consider the following 2D Fourier transform in terms of wave number $\bm \xi$:
 \begin{equation}
   \begin{aligned}
     &F(\bm \xi)=\mathcal{F}[f(\bm x)]=\int_{\mathbb{R}^2}f(x)e^{-i\bm \xi\cdot \bm x}d\bm x,\\
     &f(\bm x)=\mathcal{F}^{-1}[F(\bm \xi)]=\frac{1}{4\pi^2}\int_{\mathbb{R}^2}F(\bm \xi)e^{i\bm \xi \cdot \bm x}d\bm \xi.
   \end{aligned}
 \end{equation}
 
 We are seeking the fundamental solution to the {\it discrete} problem,
 \begin{equation}
   \ddot{\bm u}_i+\sum D_{ik}\bm u_k = \bm f_i.
 \end{equation} The corresponding {\it continuum} problem is
 \begin{equation}
   (c_{ijkl}\partial_j\partial_l-\rho\delta_{jl}\partial_t^2)u_k = f_i,
 \end{equation}
 where $c_{ijkl}$ is the elastic constant. The elastic constant is connected to the force constant matrix, and the density $\rho$ is related to the volume of unit cell. We may take the Laplace transforms of two problems,
 \begin{equation}
   (\Gamma_{ik}(\bm \xi)+\rho s^2\delta_{ik})U_k= F_i,
 \end{equation}
 and 
 \begin{equation}
   (D(\bm \xi)+s^2I)\bm U_k = \bm F_i.
 \end{equation}
 Here $\Gamma_{ik}(\bm \xi) = c_{ijkl}\xi_j\xi_l$. $D(\bm \xi)=\sum D_R e^{-i\bm \xi\cdot \bm R}$ is the dynamic matrix for the discrete problem. Let us consider the Taylor expansion of $D(\bm \xi)$,  
\begin{equation}
  \begin{aligned}
    D(\bm \xi) &\approx  \sum_{\bm R}D_{\bm R}(1-\frac{1}{2}(\bm \xi\cdot \bm R)^2)\\
    &= (-\frac{1}{2}\sum_{\bm R} D_{i-k}R_jR_l\xi_j\xi_l)_{ik}.
  \end{aligned}
\end{equation}
The $\xi^0$ and $\xi^1$ terms disappear because of the inversion symmetry and the translation invariance. We see from direct comparison that the two problems are consistent with each other if 
\begin{equation}
  c_{ijkl}=-\frac{1}{2v_0}\sum_R D_{i-k}R_jR_l,
\end{equation}
where $v_0=1/\rho$ is the atomic unit volume.

Now, we show the connection between Green's functions of two above problems. The continuum Green's function is expressed as a Fourier integral,
\begin{equation}
  \label{eq:egf}
  G(\bm x,s)= \frac{1}{4\pi^2}\int_{\mathbb{R}^2} \big[\rho s^2 I + \Gamma(\bm\xi)]^{-1} \cos (\bm \xi \cdot \bm x) d\bm \xi,
\end{equation}
and the lattice Green's function is written as an integral over the first Brillouin zone,
\begin{equation}
  \label{eq:lgf}
  G(\bm R_j, s) = \frac{1}{|B|}\int_{B}[s^2I+D(\bm \xi)]^{-1}\cos(\bm \xi\cdot \bm R_j)d\bm \xi.
\end{equation}
Unlike static Green's functions, the matrices $s^2I+D(\bm \xi)$ and $\rho s^2 I + \Gamma(\bm\xi)$ are positive definite. Hence, both Green's functions are uniquely defined. Formally, the lattice Green's function \eqref{eq:lgf} converges to the continuum Green's function \eqref{eq:egf} when $\bm x$ and $\bm R$ are large enough, since only small $\bm \xi$ will contribute to the integral \eqref{eq:lgf} in this regime.

The continuum dynamic Green's function can be further simplified for numerical evaluation. When we turn it to the polar coordinate, the continuum Green's function can be written as
\begin{equation}
  G(\bm x,s)= \frac{1}{4\pi^2}\int_{0}^{2\pi} \int_0^{+\infty}
  \big[\rho s^2 I + r^2 \Gamma(\bm n)]^{-1} \cos (r \bm n \cdot \bm x) r dr d\theta, 
\end{equation}
where $\bm n=(\cos \theta, \sin \theta)$. With the eigenvalue decomposition, the above expression of the continuum Green's function is written as
\begin{equation}
  G(\bm x,s)= \frac{1}{4\pi^2}\int_{0}^{2\pi} 
  \sum_{i}\int_0^{+\infty}
  \frac{ \cos (r \bm n \cdot \bm x) r}{ \rho s^2  + r^2 \rho c_i^2(\bm n)}  dr \;\bm w_i(\bm n)\otimes \bm w_i(\bm n) d\theta.
\end{equation}
Here we wrote the $i$th eigenvalue of the matrix $\Gamma$ as $\rho c_i^2.$ The inner integral can be simplified to
\begin{equation}
  \frac{1}{\rho c_i^2} \int_0^{+\infty} \frac{\cos (t |\bm n \cdot \bm x| s/c_i)}{1+t^2}dt 
  = \frac{1}{\rho c_i^2} \Big[\sinh(a_i) \text{Shi}(a_i)  - \cosh(a_i) \text{Chi}(a_i)
  \Big]
\end{equation}
with $a=|\bm n \cdot \bm x| s/c_i.$ In this formula, both $\sinh(a_i)\text{Shi}(a_i)$ and $\cosh(a_i)\text{Chi}(a_i)$ are increasing exponentially when $a_i>0$. This expression may give us great round-off errors when we perform numerical simulations. But we can rearrange and simplify this expression as follows: By the properties of exponential integrals, $\text{Shi}(x) - \text{Chi}(x) = -\text{Ei}(-x)$, and $\text{Shi}(x) + \text{Chi}(x) = \text{Ei}(x)$ when $s$ is positive, we have another form of the inner integral,
\begin{equation}
  -\frac{1}{2\rho c_i^2} \Big[e^{a_i} \text{Ei}(-a_i)  + e^{-a_i} \text{Ei}(a_i)\Big],
\end{equation} 
where $e^{a_i} \text{Ei}(-a_i)$ and  $e^{-a_i} \text{Ei}(a_i)$ go to zero when $\|\bm x\|$ goes to infinity. This is more amenable to numerical evaluations. 

It is interesting that this continuum Green's function is also connected to Wang's formula \cite{Wang1994}, where the dynamic Green's function in the time domain is given by
\begin{equation}
  \label{eq:wang}
  g(\bm x, t) = \frac{H(t)}{8\pi^2}\int_{|\bm n|=1}\sum_i\frac{P^i(\bm n)}{\rho c_i}\left(\frac{1}{c_it+|\bm n\cdot \bm x|}+\frac{1}{c_it-|\bm n\cdot \bm x|}\right)d\bm n.
\end{equation}
Here $P^i(\bm n) = \bm w_i(\bm n)\otimes \bm w_i(\bm n)$. Even though our derivation differs from Wang's, both formulas are the integrals over unit circle. In fact, we take the Laplace transform of Eq. \ref{eq:wang} when $|\bm n\cdot \bm x| \neq 0$,
\begin{equation}
  G(\bm x, s) = -\frac{1}{4\pi^2}\int_0^{2\pi}\sum_i\frac{P^i}{\rho c_i^2}\Big[\frac{e^{a_i}\text{Ei}(-a_i) + e^{-a_i}\text{Ei}(a_i)}{2}\Big]d \theta.
\end{equation}
Note that there is a singularity when $\bm n\cdot \bm x = 0$.

In the implementation, the singularity can be removed by adding another term inside the integral. More specifically,
\begin{equation}\label{eq: A16}
  G(\bm x, s) = -\frac{1}{4\pi^2}\int_0^{2\pi}\sum_i\frac{P^i}{\rho c_i^2}\Big[\frac{e^{a_i}\text{Ei}(-a_i) + e^{-a_i}\text{Ei}(a_i)}{2} - \ln(|\bm n\cdot \bm x|)\Big]d \theta - G(\bm x, 0),
\end{equation}
where
\begin{equation}
  \begin{aligned}
    G(\bm x,0) &= -\frac{1}{4\pi^2}\int_0^{2\pi}\sum_i\frac{P^i}{\rho c_i^2}\ln(|\bm n\cdot \bm x|)d \theta \\
    & =-\frac{1}{4\pi^2\rho}\int_0^{2\pi}{\Gamma^{-1}(\bm n)}\ln(|\bm n\cdot \bm x|)d \theta, 
  \end{aligned}
\end{equation}
is the static Green's function for anisotropic solids, which according to 
the well known Stroh's formalism, can be expressed explicitly as,
\begin{equation}
  G(\bm x,0) = \frac{1}{\pi}\text{Im}(\ln(z_1)\bm a_1\otimes \bm a_1+\ln(z_2)\bm a_2\otimes \bm a_2)+C,
\end{equation}
where $z_1 = x_1+\mu_1x_2$ and $z_2 = x_1+\mu_2x_2$. $\mu_1$ and $\mu_2$ are the two roots to the sextic equation of elasticity, whose imaginary parts are positive. $\bm a_1$ and $\bm a_2$ are the two corresponding eigenvalues, and $C$ is a constant. Using this subtraction, the integral  in \eqref{eq: A16} can be easily evaluated. 
\section*{References}
\bibliographystyle{plain}
\bibliography{abcs_bbt}
\end{document}